  \let\apxmark\relax
  \def\apxmark{\textcolor{lipicsGray}{\sf\bfseries$\!$(*)\,}}
\newenvironment{appproof}{\textbf{\textcolor{lipicsGray}{Proof.}}}{\hfill$\square$\medskip}
\author{Patrizio~Angelini}{John Cabot University, Rome, Italy}{pangelini@johncabot.edu}{}{}
\author{Michael~A.~Bekos}{Department of Mathematics, University of Ioannina, Ioannina, Greece}{bekos@uoi.gr}{}{}
\author{Julia~Katheder}{Wilhelm-Schickard-Institut f{\"u}r Informatik, Universit{\"a}t T{\"u}bingen, T{\"u}bingen, Germany}{julia.katheder@uni-tuebingen.de}{}{}
\author{Michael~Kaufmann}{Wilhelm-Schickard-Institut f{\"u}r Informatik, Universit{\"a}t T{\"u}bingen, T{\"u}bingen, Germany}{mk@informatik.uni-tuebingen.de}{}{}
\author{Maximilian~Pfister}{Wilhelm-Schickard-Institut f{\"u}r Informatik, Universit{\"a}t T{\"u}bingen, T{\"u}bingen, Germany}{maximilian.pfister@uni-tuebingen.de}{}{}
\authorrunning{Angelini, Bekos, Katheder, Kaufmann, Pfister}
\keywords{Graph Drawing, RAC graphs, Straight-line and bent drawings}
\title{RAC Drawings of Graphs with Low Degree}
\newtheorem{prop}{Property}
\newtheorem{invariant}{Invariant}
\newtheorem{question}{Question}
\begin{document}
\maketitle

\newcommand{\verygood}{kite-planar $1$-planar\xspace}
\newcommand{\good}{$1$-kite-planar\xspace}
\newcommand{\piece}{piece of a kite\xspace}

% ==================================================================
\begin{abstract}
Motivated by cognitive experiments providing evidence that large crossing-angles do not impair the readability of a graph drawing, RAC (Right Angle Crossing) drawings were introduced to address the problem of producing readable representations of non-planar graphs by supporting the optimal case in which all crossings form $90^\circ$ angles.

In this work, we make progress on the problem of finding RAC drawings of graphs of low degree. In this context, a long-standing open question asks whether all degree-$3$ graphs admit straight-line RAC drawings. This question has been positively answered for the Hamiltonian degree-3 graphs. We improve on this result by extending to the class of $3$-edge-colorable degree-$3$ graphs. When each edge is allowed to have one bend, we prove that degree-$4$ graphs admit such RAC drawings, a result which was previously known only for degree-$3$ graphs. Finally, we show that $7$-edge-colorable degree-$7$ graphs admit RAC drawings with two bends per edge. This improves over the previous result on degree-$6$ graphs. 
\end{abstract}

\section{Introduction}

In the literature, there is a wealth of approaches to draw planar graphs. Early results date back to Fáry’s theorem~\cite{Far48}, which guarantees the existence of a planar straight-line drawing for every planar graph; see also~\cite{CON85,St51,SH34,Tu63,Wag36}. Over the years, several breakthrough results have been proposed, e.g., de Fraysseix, Pach and Pollack~\cite{DBLP:conf/stoc/FraysseixPP88} in the late 80’s devised a linear-time algorithm~\cite{DBLP:journals/ipl/ChrobakP95} that additionally guarantees the obtained drawings to be on an integer grid of quadratic size (thus making high-precision arithmetics of previous approaches unnecessary). Planar graph drawings have also been extensively studied in the presence of bends. Here, a fundamental result is by Tamassia~\cite{DBLP:journals/siamcomp/Tamassia87} in the context of \emph{orthogonal} graph drawings, i.e., drawings in which edges are axis-aligned polylines. In his seminal paper, Tamassia suggested an approach, called \emph{topology-shape-metrics}, to minimize the number of bends of degree-$4$ plane graphs using~flows. For a complete introduction, see~\cite{DBLP:books/ph/BattistaETT99}.

When the input graph is non-planar, however, the available approaches that yield aesthetically pleasing drawings are significantly fewer. The main obstacle here is that the presence of edge-crossings negatively affects the drawing's quality~\cite{DBLP:journals/iwc/Purchase00} and, on the other hand, their minimization turns out to be a computationally difficult problem~\cite{doi:10.1137/0604033}. In an attempt to overcome these issues, a decade ago, Huang et al.~\cite{DBLP:journals/vlc/HuangEH14} made a crucial observation that gave rise to a new line of research (currently recognized under the term ``beyond planarity''~\cite{DBLP:books/sp/20/HT2020}): edge crossings do not negatively affect the quality of the drawing too much (and hence the human's ability to read and interpret it), if the angles formed at the crossing points are large. Thus, the focus moved naturally to non-planar graphs and their properties, when different restrictions on the type of edge-crossings are imposed; see~\cite{DBLP:journals/csur/DidimoLM19} for an overview.

Among the many different classes of graphs studied as part of this emerging line of research, one of the most studied ones is the class of \emph{right-angle-crossing graphs} (or \emph{RAC graphs}, for short); see \cite{DBLP:books/sp/20/Didimo20} for a survey. These graphs were introduced by Didimo, Eades and Liotta~\cite{DBLP:conf/wads/DidimoEL09,DBLP:journals/tcs/DidimoEL11} back in 2009 as those admitting straight-line drawings in which the angles formed at the crossings are all $90^\circ$. Most notably, these graphs are optimal in terms of the crossing angles, which makes them more readable according to the observation by Huang et al.~\cite{DBLP:journals/vlc/HuangEH14}; moreover, RAC drawings form a natural generalization of orthogonal graph drawings~\cite{DBLP:journals/siamcomp/Tamassia87}, as any crossing between two axis-aligned polylines trivially yields $90^\circ$ angles. 

In the same work~\cite{DBLP:conf/wads/DidimoEL09,DBLP:journals/tcs/DidimoEL11}, Didimo, Eades and Liotta proved that every $n$-vertex RAC graph is sparse, as it can contain at most $4n-10$ edges, while in a follow-up work~\cite{DBLP:journals/ipl/DidimoEL10} they observed that not all degree-$4$ graphs are RAC. This gives rise to the following question which has also been independently posed in several subsequent works (see e.g.,~\cite{DBLP:journals/jgaa/AngeliniCDFBKS11}, \cite[Problem $6$]{Didimo2013}, \cite[Problem 9.5]{DBLP:books/sp/20/Didimo20}, \cite[Problem $8$]{DBLP:journals/csur/DidimoLM19}) and arguably forms the most intriguing open problem in the area, as it remains unanswered since more than one decade.

\begin{question}\label{q:degree-3-rac}
Does every graph with degree at most 3 admit a straight-line RAC drawing?
\end{question}

The most relevant result that is known stems from the related problem of simultaneously embedding two or more graphs on the Euclidean plane, such that the crossings between different graphs form $90^\circ$ angles. In this setting, Argyriou et al.~\cite{DBLP:journals/jgaa/ArgyriouBKS13} showed that a cycle and a matching always admit such an embedding, which implies that every \emph{Hamiltonian} degree-$3$ graph is RAC. 

Finally, note that recognizing RAC graphs is hard in the existential theory of the reals~\cite{DBLP:conf/gd/000121}, which also implies that RAC drawings may require double-exponential area, in contrast to the quadratic area requirement for planar graphs~\cite{DBLP:conf/stoc/FraysseixPP88}.

RAC graphs have also been studied by relaxing the requirement that the edges are straight-line segments, giving rise to the class of \emph{$k$-bend} RAC graphs (see, e.g, \cite{DBLP:journals/tcs/AngeliniBFK20,DBLP:journals/comgeo/ArikushiFKMT12,DBLP:journals/tcs/BekosDLMM17,DBLP:journals/comgeo/ChaplickLWZ19,DBLP:journals/mst/GiacomoDLM11}), i.e., those admitting drawings with at most $k$ bends per edge and crossings at $90^\circ$ angles. It is known that every degree-$3$ graph is $1$-bend RAC and every degree-$6$ graph is $2$-bend RAC~\cite{DBLP:journals/jgaa/AngeliniCDFBKS11}. While the flexibility guaranteed by the presence of one or two bends on each edge is not enough to obtain a RAC drawing for every graph (in fact, $1$- and $2$-bend RAC graphs with $n$ vertices have at most $5.5n-O(1)$ and $72n-O(1)$ edges, respectively~\cite{DBLP:journals/tcs/AngeliniBFK20,DBLP:journals/comgeo/ArikushiFKMT12}), it is known that every graph is $3$-bend RAC~\cite{DBLP:journals/mst/GiacomoDLM11} and fits on a grid of cubic size~\cite{DBLP:conf/esa/Forster020}.

\subparagraph{Our contribution.}
We provide several improvements to the state of the art concerning~RAC graphs with low degree. In particular, we make an important step towards answering~\cref{q:degree-3-rac} by proving that $3$-edge-colorable degree-$3$ graphs are RAC (\cref{thm:deg-3-col}). This result applies to Hamiltonian $3$-regular graphs, to bipartite $3$-regular graphs and, with some minor modifications to our approach, to all Hamiltonian degree-$3$ graphs, thus extending the result in~\cite{DBLP:journals/jgaa/ArgyriouBKS13}.  As a further step towards answering \cref{q:degree-3-rac}, we prove that bridgeless $3$-regular graphs with oddness at most $2$ are RAC (\cref{thm:oddness-2}). If their oddness is $k$, we provide an algorithm to construct a $1$-bend RAC drawing where at most $k$ edges have a bend(\cref{thm:oddness-k}).

We then focus on RAC drawings with one or two bends per edge. Namely, we prove that all degree-$4$ graphs admit $1$-bend RAC drawings and all $7$-edge-colorable degree-$7$ graphs admit $2$-bend RAC drawings (\cref{thm:degree-4-one,thm:7colorable}), which form non-trivial improvements over the state of~the~art, as the existence of such drawings was previously known only for degree-$3$ and degree-$6$ graphs~\cite{DBLP:journals/jgaa/AngeliniCDFBKS11}.

\section{Preliminaries}
Let $G = (V,E)$ be a graph. W.l.o.g. we assume that $G$ is connected, as otherwise we apply our drawing algorithms to each component of $G$ separately. 
$G$ is called \emph{degree-$k$} if the maximum degree of $G$ is $k$. 
$G$ is called \emph{$k$-regular} if the degree of each vertex of $G$ is exactly $k$.
A $2$-factor of an undirected graph $G=(V,E)$ is a spanning subgraph of $G$ consisting of vertex disjoint cycles. 
Let $F$ be a $2$-factor of $G$ and let $\prec$ be a total order of the vertices such that the vertices of each cycle $C \in F$ appear consecutive in $\prec$ according to some traversal of $C$. In other words, every two vertices that are adjacent in $C$ are consecutive in $\prec$ except for two particular vertices, which are the first and the last vertices of $C$ in $\prec$. We call the edge between these two vertices the \emph{closing edge} of $C$.
By definition, $\prec$ also induces a total order of the cycles of $F$. Let $\{u,v\}$ be an edge in $E \setminus F$ and let $C$ and $C'$ be the cycles of $F$ that contain $u$ and $v$, respectively.
If $C = C'$, then $\{u,v\}$ is a \emph{chord} of $C$. Otherwise, $C \neq C'$. If $u \prec v$, $\{u,v\}$ is called a \emph{forward edge} of $u$ and a \emph{backward edge} of $v$.
The following theorem provides a tool to partition the edges of a bounded degree graph into $2$-factors \cite{10.1007/BF02392606}.
\begin{theorem}
[Eades, Symvonis, Whitesides \cite{DBLP:journals/dam/EadesSW00}] \label{thm:2-factors} Let $G=(V,E)$ be an n-vertex undirected graph of degree $\Delta$ and let $d = \lceil \Delta / 2 \rceil$. Then, there exists a directed multi-graph $G' = (V,E')$ with the following properties:
\begin{enumerate}
    \item each vertex of $G'$ has indegree $d$ and outdegree $d$;
    \item $G$ is a subgraph of the underlying undirected graph of $G'$; and
    \item the edges of $G'$ can be partitioned into $d$ edge-disjoint directed $2$-factors.
\end{enumerate}

Furthermore, the directed graph $G'$ and the $d$ $2$-factors can be computed in $\mathcal{O}(\Delta^2 n)$ time.
\end{theorem}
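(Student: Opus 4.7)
The plan is to build $G'$ in three stages: regularize $G$ to a multi-graph of even degree, orient it via an Eulerian circuit, and decompose the result into $d$ directed $2$-factors by applying König's edge-colouring theorem to a bipartite encoding of the arcs.

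First, I would augment $G$ to a $2d$-regular multi-graph $G^\star$ on the vertex set $V$. Each vertex $v$ has initial deficit $\delta(v) = 2d - \deg_G(v) \ge 0$, and the total deficit $\sum_v \delta(v) = 2dn - 2|E|$ is even. Iteratively pick two vertices with positive deficit and join them by a fresh (possibly parallel) edge, lowering each of their deficits by one; if a lone vertex is left with a (necessarily even) residual deficit, close it with self-loops. By construction $G \subseteq G^\star$, so property~2 will hold for any subsequent orientation. Now every vertex of $G^\star$ has even degree $2d$, so each connected component admits an Eulerian circuit; orienting the edges along such a traversal produces a directed multi-graph $G'$ in which every vertex has equal in- and out-degree — necessarily $d$, with self-loops contributing $1$ to each count. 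This yields property~1.

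For the $2$-factor partition, I would form the bipartite multi-graph $B$ with parts $V^+$ and $V^-$ (two disjoint copies of $V$) by replacing every arc $(u,v) \in E'$ with the edge $\{u^+, v^-\}$. Then $B$ is $d$-regular bipartite, so by König's edge-colouring theorem its edge set decomposes into $d$ perfect matchings $M_1, \dots, M_d$. Each $M_i$ lifts back to a set of arcs of $G'$ in which every vertex has exactly one outgoing and one incoming arc, i.e.\ a directed $2$-factor, establishing property~3.

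The main obstacle is meeting the $\mathcal{O}(\Delta^2 n)$ running-time bound, and the bottleneck is the matching decomposition of $B$. Using classical algorithms that exploit regularity, a single perfect matching of a $k$-regular bipartite multi-graph on $2n$ vertices can be found in time linear in the number of its edges, i.e.\ $\mathcal{O}(kn)$; peeling matchings off iteratively, after the $i$-th step the remainder is $(d-i)$-regular, so the total decomposition cost is $\sum_{i=0}^{d-1} \mathcal{O}((d-i)n) = \mathcal{O}(d^2 n) = \mathcal{O}(\Delta^2 n)$. The augmentation and Eulerian orientation stages are each straightforward in $\mathcal{O}(\Delta n)$ time, so the whole construction fits within the claimed bound.
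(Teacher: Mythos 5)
This statement is imported by citation from Eades, Symvonis and Whitesides and is not proved in the paper, so there is nothing internal to compare against; your reconstruction (regularize to a $2d$-regular multigraph, orient along Eulerian circuits, split each vertex into an in-copy and an out-copy, and apply K\"onig's edge-colouring theorem to the resulting $d$-regular bipartite multigraph) is exactly the standard argument behind the cited result, and it is correct. The only soft spot is the offhand claim that one perfect matching of a $k$-regular bipartite multigraph is found in $\mathcal{O}(kn)$ time, which deserves a concrete reference (e.g.\ Euler-partition-based edge colouring \`a la Gabow, or Schrijver's $\mathcal{O}(\Delta m)$ bipartite edge-colouring algorithm), but the resulting total of $\mathcal{O}(d^2 n)=\mathcal{O}(\Delta^2 n)$ is attainable and matches the stated bound.
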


Let $\Gamma$ be a polyline drawing of $G$ such that the vertices and edge-bends lie on grid points. The area of $\Gamma$ is determined by the smallest enclosing rectangle. Let $\{u,v\}$ be an edge in $\Gamma$. We say that $\{u,v\}$ is using an \emph{orthogonal port} at $u$ if the edge-segment $s_u$ of $\{u,v\}$ that is incident to $u$ is either horizontal or vertical; otherwise it is using an \emph{oblique port} at $u$.
We denote the orthogonal ports at $u$ by $N$, $E$, $S$ and $W$, if $s_u$ is above, to the right, below or to the left of $u$, respectively. If no edge is using a specific orthogonal port, we say that this port is \emph{free}.

\section{RAC drawings of $\mathbf{3}$-edge-colorable degree-$\mathbf{3}$ graphs}
\label{sec:degree-3}

In this section, we prove that $3$-edge-colorable degree-$3$ graphs admit RAC drawings of quadratic area, which can be computed in linear time assuming that the edge coloring is given (testing the existence of such a coloring is NP-complete even for $3$-regular graphs~\cite{doi:10.1137/0210055}).

\begin{theorem} \label{thm:deg-3-col}
Given a $3$-edge-colorable degree-$3$ graph $G$ with $n$ vertices and a $3$-edge-coloring of $G$, it is possible to compute in $O(n)$ time a RAC drawing of $G$ with $O(n^2)$ area.
\end{theorem}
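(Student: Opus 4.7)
My plan is to exploit the 3-edge-coloring as a scaffold: I will fix horizontal and vertical directions for two of the color classes, and route the third as axis-aligned segments through the gaps. Let the given matchings be $M_1$, $M_2$, $M_3$. Because each vertex is incident to at most one edge per color, the subgraph $F:=M_1\cup M_2$ has maximum degree two, and is therefore a disjoint union of paths and even cycles along which the labels $M_1$ and $M_2$ alternate. This alternation is precisely what is needed to draw each component of $F$ as a rectilinear polyline on the integer grid in which every $M_1$-edge becomes a horizontal grid segment and every $M_2$-edge a vertical one; a component of length $\ell$ fits inside an $O(\ell)\times O(\ell)$ staircase.

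I would then place the components of $F$ into pairwise disjoint axis-aligned bounding boxes arranged side by side along the $x$-axis, so that no two components share any column. Inside a single component, the only crossings that can arise involve exactly one horizontal $M_1$-edge and one vertical $M_2$-edge and therefore already meet at a right angle, so the $F$-part of the drawing is automatically RAC and uses $O(n^2)$ area in total.

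The more delicate phase adds the matching $M_3$. The key observation is that, at every vertex $v$, the incident $M_1$-edge (if any) consumes one of the horizontal ports $\{E,W\}$ and the incident $M_2$-edge (if any) consumes one of the vertical ports $\{N,S\}$, leaving at least one free horizontal and one free vertical port at $v$. I would therefore assign to each $M_3$-edge $\{u,v\}$ a common coordinate for its two endpoints --- either the same $y$-value, realizing the edge as a horizontal segment, or the same $x$-value, realizing it as a vertical segment --- and route it as a straight line through the chosen free ports. A horizontal $M_3$-segment crosses only vertical edges, all of which belong to $M_2$, and therefore only at $90^\circ$; a vertical $M_3$-segment crosses only horizontal $M_1$-segments, again at $90^\circ$. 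Parallel overlaps are avoided by dedicating, for each $M_3$-edge, a fresh row (respectively column) inside the strip separating the bounding boxes of the endpoints' components.

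The main obstacle, and the part that drives the layout design, is producing one coherent coordinate assignment that simultaneously (i) realizes each component of $F$ as a rectilinear polyline with the required alternation, (ii) equips every vertex with the specific free port demanded by its $M_3$-edge, and (iii) leaves sufficient unoccupied rows and columns for the straight routing of all $M_3$-edges. I would handle this by traversing each component of $F$ once, allocating a new row whenever an $M_2$-edge is encountered and a new column whenever an $M_1$-edge is encountered, and pre-reserving an extra row or column for every $M_3$-edge incident to the current vertex according to whether that edge will be drawn horizontally or vertically. Since each vertex and each edge is inspected a constant number of times, the algorithm runs in $O(n)$ time, and the total number of distinct rows and columns used is $O(n)$, which yields the stated $O(n^2)$ area.
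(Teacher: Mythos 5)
There is a genuine gap, and it lies exactly where you flag ``the main obstacle'': your scheme requires every edge of $G$, including all of $M_2$ and $M_3$, to be an axis-parallel straight segment, which forces the two endpoints of every edge to share an $x$- or a $y$-coordinate. This over-constrains the coordinates and is already infeasible for $K_{3,3}$. Take $M_1=\{a_1b_1,a_2b_2,a_3b_3\}$, $M_2=\{a_1b_2,a_2b_3,a_3b_1\}$, $M_3=\{a_1b_3,a_2b_1,a_3b_2\}$. Then $F=M_1\cup M_2$ is a single $6$-cycle, drawn in your scheme as a rectilinear hexagon with $M_1$ horizontal and $M_2$ vertical. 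For the $M_3$-edge $\{a_1,b_3\}$ to be horizontal you need $y(a_1)=y(b_3)$; since $y(a_1)=y(b_1)$ and $y(b_3)=y(a_3)$, this forces $y(b_1)=y(a_3)$, but $b_1$ and $a_3$ already share a column (they are joined by a vertical $M_2$-edge), so they would coincide. Symmetrically, making $\{a_1,b_3\}$ vertical forces $a_2=b_2$. So this edge can be drawn neither horizontally nor vertically, and no allocation of ``fresh rows and columns'' can help: a \emph{straight} segment cannot be routed through a dedicated row or column of a separating strip unless its endpoints already lie there, so that device only makes sense for bent edges, which the theorem does not allow. A second, smaller issue is that $M_1\cup M_3^{h}$ (the horizontal edges) may contain cycles, and a cycle cannot have all its edges collinear and horizontal.

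The paper escapes precisely this trap by \emph{not} insisting that all three matchings be axis-parallel: $M_1$ is horizontal, $M_3$ is vertical, and $M_2$ is drawn with \emph{oblique} segments (generically as unit diagonals) that are guaranteed to be crossing-free. Since the only crossings then occur between horizontal $M_1$-segments and vertical $M_3$-segments, the drawing is RAC, and the oblique color class supplies exactly the slack your construction lacks. The price is that one must prove the $M_2$-edges are crossing-free, which is done by building the row order from the components of $M_1\cup M_2$ and the column order from the components of $M_2\cup M_3$, coordinated via a BFS of an auxiliary bipartite ``component-intersection'' graph, with special treatment of one exceptional closing edge. If you want to salvage your plan, the key change is to give $M_2$ (or some third class) non-axis-parallel slopes and then argue planarity of that class, rather than trying to make all edges orthogonal.
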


\noindent We assume w.l.o.g.\ that $G$ does not contain degree-$1$ vertices, as otherwise we can replace each such vertex with a $3$-cycle while maintaining the $3$-edge-colorability of the graph and without asymptotically increasing the size of the graph.
Since $G$ is $3$-edge-colorable, it can be decomposed into three matchings $M_1$, $M_2$ and $M_3$. In the produced RAC drawing, the edges in $M_1$ will be drawn horizontal, those in $M_3$ vertical, while those in $M_2$ will be crossing-free, not maintaining a particular slope. Let $H_y$ and $H_x$ be two subgraphs of $G$ induced by $M_1 \cup M_2$ and $ M_2 \cup M_3$, respectively. Since every vertex of $G$ has at least two incident edges, which belong to different matchings, each of $H_y$ and $H_x$ spans all vertices of $G$. Further, any connected component in $H_y$ or $H_x$ is either a path or an even-length~cycle, as both $H_y$ and $H_x$ are degree-$2$ graphs alternating between edges of different matchings.

\begin{figure}[t]
    \centering
    \begin{subfigure}[b]{.38\textwidth}
    \centering
    \includegraphics[scale=0.6,page=1]{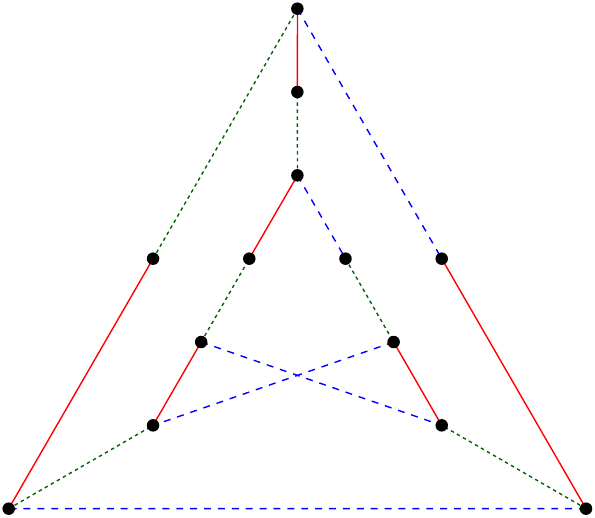}
    \subcaption{$G$}
    \end{subfigure}
    \hfil
    \begin{subfigure}[b]{.31\textwidth}
    \centering
    \includegraphics[scale=0.6,page=2]{images/example-graph-alternative.pdf}
    \subcaption{$c_1,\dots, c_5$}
    \end{subfigure}
    \hfil
    \begin{subfigure}[b]{.27\textwidth}
    \centering
    \includegraphics[scale=0.8,page=3]{images/example-graph-alternative.pdf}
    \subcaption{$\mathcal{H}$}
    \end{subfigure}
    \caption{(a) A non-planar, non-Hamiltonian, 3-edge-colorable degree-$3$ example graph $G$. The matching $M_1$ is drawn with blue dashed lines, $M_2$ with red solid lines and $M_3$ with green dotted lines. (b) The components $c_1$, $c_2$ and $c_3$ of the subgraph $H_y$ induced by $M_1 \cup M_2$ (shaded in blue) and the components $c_4$ and $c_5$ of $H_x$ induced by $M_2 \cup M_3$ (shaded in green). (c) The auxiliary graph $\mathcal{H}$ in which the components of $H_y$ and $H_x$ sharing at least one vertex are connected by an edge. In this example, the BFS traversal of the components of $\mathcal{H}$ is $c_1, c_2, c_3, c_4, c_5$.}
    \label{fig:example-graph-deg-3}
\end{figure}

We define an auxiliary bipartite graph $\mathcal{H}$, whose first (second) part has a vertex for each connected component in $H_y$ ($H_x$), and there is an edge between two vertices if and only if the corresponding components share at least one vertex; see \cref{fig:example-graph-deg-3}.

\begin{prop}
The auxiliary graph $\mathcal{H}$ is connected.
\end{prop}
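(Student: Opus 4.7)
The plan is to exploit the connectivity of $G$ together with the basic observation that, since $H_y$ and $H_x$ both span $V(G)$, every vertex $w \in V(G)$ lies simultaneously in exactly one component $c_y(w)$ of $H_y$ and exactly one component $c_x(w)$ of $H_x$, and these two components are by definition adjacent in $\mathcal{H}$ (they share $w$). In particular, every vertex of $\mathcal{H}$ can be written as $c_y(w)$ or $c_x(w)$ for some $w \in V(G)$, so it suffices to show that $c_y(u)$ and $c_y(v)$ belong to the same connected component of $\mathcal{H}$ for every pair $u,v \in V(G)$.

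To establish this, I would fix any path $u = w_0, w_1, \ldots, w_k = v$ in $G$ (which exists since $G$ is connected by assumption) and walk along it. For each consecutive pair $w_i, w_{i+1}$, the edge $\{w_i, w_{i+1}\}$ belongs to one of the three matchings: if it lies in $M_1 \cup M_2$, then $w_i$ and $w_{i+1}$ are in the same component of $H_y$, i.e.\ $c_y(w_i) = c_y(w_{i+1})$; if it lies in $M_2 \cup M_3$, then analogously $c_x(w_i) = c_x(w_{i+1})$. In both cases, by combining the edge $c_y(w_i)c_x(w_i)$ of $\mathcal{H}$ with the edge $c_y(w_{i+1})c_x(w_{i+1})$ of $\mathcal{H}$ and the coinciding component, one obtains a walk in $\mathcal{H}$ of length at most $3$ connecting $c_y(w_i)$ to $c_y(w_{i+1})$. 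Concatenating these short walks over $i = 0, \ldots, k-1$ yields a walk from $c_y(u)$ to $c_y(v)$ in $\mathcal{H}$, as desired.

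I do not expect any real obstacle: the main point to be careful about is the remark that every vertex of $\mathcal{H}$ is of the form $c_y(w)$ or $c_x(w)$ for some $w \in V(G)$, which follows from the fact that both $H_y$ and $H_x$ span $V(G)$ (guaranteed in the paragraph above the proposition by the hypothesis that every vertex of $G$ has at least two incident edges from different matchings, after the reduction removing degree-$1$ vertices). Once this is noted, the argument reduces to a straightforward induction on the length of a path in $G$.
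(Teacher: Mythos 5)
Your argument is correct and uses essentially the same mechanism as the paper's proof: take a path in $G$ between representatives of the two components and convert each step into coincidence or adjacency of components in $\mathcal{H}$ via shared vertices. You phrase it directly (as an induction along the path) where the paper argues by contradiction and singles out the first $M_3$-edge, but this is only a cosmetic difference.
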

\begin{proof}
Suppose for a contradiction that $\mathcal{H}$ is not connected. Let $v_c$ and $v_{c'}$ be two vertices of $\mathcal{H}$ that are in different connected components of $\mathcal{H}$. By definition of $\mathcal{H}$, $v_c$ and $v_{c'}$ correspond to connected components $c$ and $c'$, respectively, of $H_y$ or $H_x$. W.l.o.g.\ assume that $c$ belongs to $H_y$. Let $u_0$ and $u_k$ be two vertices of $G$ that belong to $c$ and $c'$, respectively. Since $G$ is connected, there is a path $P =(u_0,u_1,u_2,\dots u_k)$ between $u_0$ and $u_k$ in $G$, such that no two consecutive edges in $P$ belong to the same matching. Let $(u_{i},u_{i+1})$ be the first edge of $P$ from $u_0$ to $u_k$ that belongs to $M_3$, which exists since $c \neq c'$. By construction, this implies that $u_i$ belongs to $c$ and to another component $c^*$ of $H_x$. 
By definition of $\mathcal{H}$, $v_c$ and $v_{c^*}$ are connected in $\mathcal{H}$, where $v_{c^*}$ is the corresponding vertex of $c^*$ in $\mathcal{H}$. Repeating this argument until $u_k$ is reached yields a path in $\mathcal{H}$ from $v_c$ to $v_{c'}$, a contradiction. \end{proof}

We now define two total orders $\prec_y$ and $\prec_x$ of the vertices of $G$, which will then be used to assign their $y$- and $x$-coordinates, respectively, in the final RAC drawing of $G$.
Since we seek to draw the edges of $M_1$ ($M_3$) horizontal (vertical), we require that the endvertices of any edge in $M_1$ ($M_3$) are consecutive in $\prec_y$ ($\prec_x$, respectively). Moreover, for the edges of $M_2$, we guarantee some properties that will allow us to draw them without crossings. 

To construct $\prec_y$ and $\prec_x$, we process the components of $H_y$ and $H_x$ according to a certain BFS traversal of $\mathcal{H}$ and for each visited component of $H_y$ ($H_x$), we append all its vertices to $\prec_y$ ($\prec_x$) in a certain order. 

To select the first vertex of the BFS traversal of $H$, we consider a vertex $u$ of $G$ belonging to two components $c$ and $c'$ of $H_y$ and $H_x$, respectively, such that $u$ is the endpoint of $c$ if $c$ is a path; if $c$ is a cycle, we do not impose any constraints on the choice of $u$. We refer to vertex $u$ as the \emph{origin vertex} of $G$. Also, let $v_c$ and $v_{c'}$ be the vertices of $\mathcal{H}$ corresponding to $c$ and $c'$, respectively. By definition of $\mathcal{H}$, $v_c$ and $v_{c'}$ are adjacent in $\mathcal{H}$. We start our BFS traversal of $\mathcal{H}$ at $v_c$ and then we move to $v_{c'}$ in the second step (note that this choice is not needed for the definition of $\prec_y$ and $\prec_x$, but it guarantees a structural property that will be useful later). From this point on, we continue the BFS traversal to the remaining vertices of $\mathcal{H}$ without further~restrictions.
In the following, we describe how to process the components of $H_y$ and $H_x$ in order to guarantee an important property (see~\cref{prop:consecutive})

Let $c$ be the component of $H_y$ or $H_x$ corresponding to the currently visited vertex in the traversal of $\mathcal{H}$. Since $\mathcal{H}$ is bipartite, no other component of $H_y$ ($H_x$) shares a vertex with $c$, if $c$ belongs to $H_y$ ($H_x$). Hence, no vertex of $c$ already appears in $\prec_y$ ($\prec_x$).

If $c$ is a path, then we append the vertices of $c$ to $\prec_y$ or $\prec_x$ in the natural order defined by a walk from one of its endvertices to the other. Note that if $c$ is the first component in the BFS traversal of $\mathcal{H}$, one of these endvertices is by definition the origin vertex of $G$, which we choose to start the walk. Hence, in the following we focus on the case that $c$ is a cycle. In this case, the vertices of $c$ will also be appended to $\prec_y$ or $\prec_x$ in the natural order defined by some specific walk of $c$, such that the so-called \emph{closing edge} connecting the first and the last vertex of $c$ in this order belongs to $M_2$. Note that an edge might be closing in both orders $\prec_y$ and $\prec_x$.

Suppose first that $c \in H_y$. If $c$ is the first component in the BFS traversal of $\mathcal{H}$, then we append the vertices of $c$ to $\prec_y$ in the order that they appear in the cyclic walk of $c$ starting from the origin vertex of $G$ and following the edge of $M_1$ incident to it. Otherwise, let $v$ be the first vertex of $c$ in $\prec_x$, which is well defined since there is at least one vertex of $c$ that is part of $\prec_x$, namely, the one that is shared with its parent.
We append the vertices of $c$ to $\prec_y$ in the order that they appear in the cyclic walk of $c$ starting from $v$ and following the edge of $M_1$ incident to $v$. Hence, $v$ is the first vertex of $c$ in both $\prec_y$ and $\prec_x$. In both cases, it follows that the closing edge of $c$ belongs to $M_2$. 

Suppose now that $c \in H_x$, which implies that $c$ is not the first component in the BFS traversal of $\mathcal{H}$. Let $v$ be the first vertex of $c$ in $\prec_y$, which is again well defined since there is at least one vertex of $c$ that is part of $\prec_y$. We append the vertices of $c$ to $\prec_x$ in the inverse order that they appear in the cyclic walk of $c$ starting from $v$ and following the edge $(v, w)$ of $M_3$ incident to $v$ (or equivalently, in the order they appear in the cyclic walk of $c$ starting from the neighbor of $v$ different from $w$ and ending at $v$). Hence, $v$ is the first vertex of $c$ in $\prec_y$ and the last vertex of $c$ in $\prec_x$. Also in this case, the closing edge of $c$ belongs to $M_2$. See \cref{fig:example-ordering} for an illustration. Note that the closing edge of a component $c$ is contained inside the parent component of $c$ is the BFS traversal.
Moreover, by construction, the following property holds.

\begin{figure}[t]
    \centering
    \begin{subfigure}[b]{.31\textwidth}
    \centering
    \includegraphics[scale=0.6,page=1]{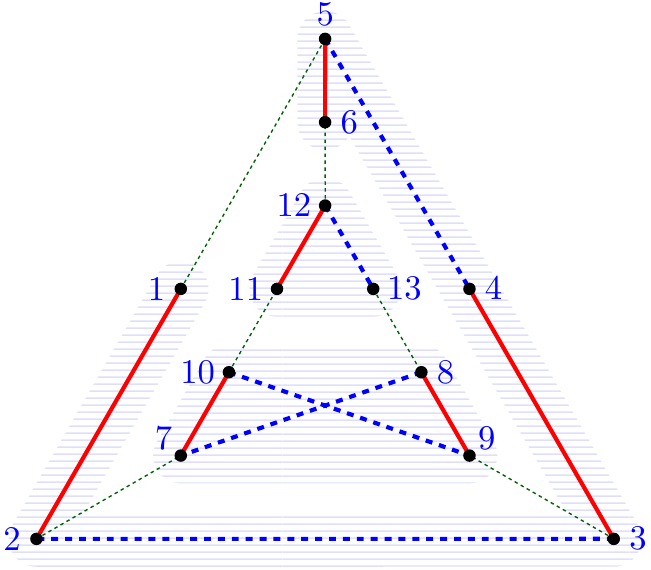}
    \subcaption{}
    \end{subfigure}
    \hfil
    \begin{subfigure}[b]{.31\textwidth}
    \centering
    \includegraphics[scale=0.6,page=2]{images/example-ordering.pdf}
    \subcaption{}
    \end{subfigure}
    \begin{subfigure}[b]{.31\textwidth}
    \centering
    \includegraphics[scale=0.5,page=3]{images/example-ordering.pdf}
    \subcaption{}
    \end{subfigure}
    \caption{The total orders (a)~$\prec_y$ for $H_y$ that consists of the blue and red edges, and (b)~$\prec_x$ for $H_x$ that consists of the green and red edges. The final drawing of $G$ is shown in (c).}
    \label{fig:example-ordering}
\end{figure}

\begin{prop}\label{prop:consecutive}
The endvertices of any edge in $M_1$ ($M_3$) are consecutive in $\prec_y$ ($\prec_x$). The endvertices of any edge in $M_2$ are consecutive in $\prec_y$ ($\prec_x$) unless this edge is a closing edge in a component of $H_y$ (of $H_x$).
\end{prop}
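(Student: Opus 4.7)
The plan is to derive both parts of the proposition directly from the way $\prec_y$ and $\prec_x$ are built, namely by appending, for each visited component of $H_y$ (resp.\ $H_x$), its vertices in the order prescribed by a specific walk along that component. The central observation is that for any edge $\{u,v\}$ of a component $c$, its endpoints end up consecutive in $\prec_y$ (resp.\ $\prec_x$) exactly when the walk along $c$ traverses $\{u,v\}$. The only edge of $c$ that can fail to be traversed is the closing edge, and this can occur only when $c$ is a cycle, since for paths the walk covers every edge. Different components of $H_y$ (resp.\ $H_x$) are vertex-disjoint and appended as consecutive blocks, so there is no interference between components.

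For an edge $\{u,v\}\in M_1\subseteq H_y$, the endpoints lie in the same component $c$ of $H_y$. If $c$ is a path, the walk covers every edge and the claim is immediate. If $c$ is an even cycle alternating between $M_1$- and $M_2$-edges, the construction starts the walk at a distinguished vertex and follows the incident $M_1$-edge first; by parity, the last edge of the cyclic walk, i.e.\ the closing edge, then lies in $M_2$. Hence no $M_1$-edge of $c$ is closing, and $u$ and $v$ are consecutive in $\prec_y$. The argument for $M_3$ with respect to $\prec_x$ is symmetric: although the walk is recorded in inverse order, this only swaps the roles of first and last vertex, so the closing edge remains the sole untraversed edge; by the same parity argument, starting with the $M_3$-edge incident to $v$ forces the closing edge to lie in $M_2$.

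For an edge $\{u,v\}\in M_2$, the inclusion $M_2\subseteq H_y\cap H_x$ guarantees that $u,v$ lie in a common component $c$ of $H_y$ and in a common component of $H_x$. Applying the same observation in each order, either the walk along $c$ traverses $\{u,v\}$, so $u$ and $v$ are consecutive in $\prec_y$ (resp.\ $\prec_x$), or $\{u,v\}$ is the closing edge of $c$, which is exactly the stated exception.

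The only substantive step is the parity check ensuring that the closing edge of a cycle of $H_y$ (resp.\ $H_x$) always lies in $M_2$ and never in $M_1$ (resp.\ $M_3$); this is precisely where the construction's choice to initiate the cycle-walk along an $M_1$-edge (resp.\ $M_3$-edge, possibly after inversion) is crucial. Everything else is a direct unfolding of the definitions, so I do not foresee any serious obstacle.
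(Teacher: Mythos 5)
Your proposal is correct and matches the paper's intent: the paper asserts the property "by construction" without an explicit proof, and your argument is precisely the unfolding of that construction — each component is appended as a contiguous block in walk order, so only the closing edge of a cycle can fail to have consecutive endpoints, and the parity of the alternating even cycle (with the walk initiated along the $M_1$-, resp.\ $M_3$-edge) forces that closing edge into $M_2$.
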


\medskip\noindent\textbf{Computing the vertex coordinates.} 
We use $\prec_y$ and $\prec_x$ to specify the $y$- and the $x$-coordinates of the vertices, respectively. To do so, we iterate through $\prec_y$ and set the $y$-coordinate of its first vertex to $1$. Let $v$ be the next vertex in the iteration and let $u$ be its predecessor in $\prec_y$.
Assume that the $y$-coordinate of $u$ is $i$. If $(u,v) \in M_1$, we set the same $y$-coordinate $i$ to $v$. Otherwise, either $u$ and $v$ belong to two different components of $H_y$ or $(u,v) \in M_2$ and we set the $y$-coordinate $i+1$ to $v$. Similarly, we iterate through $\prec_x$ and set the $x$-coordinate of its first vertex to $1$. Let $v$ be the next vertex in the iteration and let $u$ be its predecessor in $\prec_x$.
Assume that the $x$-coordinate of $u$ is $i$. If $(u,v) \in M_3$, we set the $x$-coordinate of $v$ to $i$. Otherwise, either $u$ and $v$ belong to two different components of $H_x$ or $(u,v) \in M_2$ and we set the $x$-coordinate $i+1$ to $v$. Hence, no two vertices share the same $x$- and $y$-coordinates.
We next show that the computed vertex coordinates induce a straight-line RAC drawing $\Gamma$ of $G$ with the possible exception of the edge of $M_2$ incident to the origin vertex of $G$, since this edge would be a closing edge for both $c$ and $c'$ and hence by \cref{prop:consecutive} its endpoints would be consecutive in neither $\prec_y$ nor $\prec_x$. If this edge exists, we denote it by $e^*$, while the graph $G \setminus e^*$ and its drawing in $\Gamma$ are denoted by $G^*$ and $\Gamma^*$, respectively.

\begin{lemma}\label{lem:construction}
Let $e$ be an edge of $G^*$. Then, $e$ is drawn horizontally in $\Gamma^*$ if $e \in M_1$; vertically in $\Gamma^*$ if $e \in M_3$ and crossing-free in $\Gamma^*$ if $e \in M_2$.
\end{lemma}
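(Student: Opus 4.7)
The statements for $M_1$ and $M_3$ are immediate from \cref{prop:consecutive} combined with the coordinate-assignment rule. For $e = (u,v) \in M_1$, \cref{prop:consecutive} gives that $u$ and $v$ are consecutive in $\prec_y$, and the rule sets $y(u) = y(v)$ because $(u,v) \in M_1$, so $e$ is horizontal; since $(u,v) \notin M_3$, the symmetric $x$-rule forces $x(u) \neq x(v)$, so $e$ is nondegenerate. The case $e \in M_3$ is entirely symmetric. What remains, and what constitutes the heart of the lemma, is the claim that every $e = (u,v) \in M_2$ of $G^*$ is drawn crossing-free.

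My plan is to start with a \emph{localization} observation. Since $M_2 \subseteq E(H_y) \cap E(H_x)$, both endpoints of any $M_2$ edge lie in a common $H_y$-component $c_y$ and a common $H_x$-component $c_x$. The BFS-driven construction appends each component's vertices as a contiguous block to $\prec_y$ (resp.\ $\prec_x$) and strictly increments the corresponding coordinate at every component boundary, so distinct $H_y$- (resp.\ $H_x$-) components occupy pairwise disjoint $y$- (resp.\ $x$-) ranges. Thus $e$ is confined to the rectangle $R_{c_y,c_x}$ cut out by these two ranges, and any edge crossing $e$ must enter $R_{c_y,c_x}$. I would then split via \cref{prop:consecutive}: if $e$ is not closing, then $u$ and $v$ are consecutive in both orders, so $e$ is the diagonal of a unit grid cell and a short inspection rules out crossings; otherwise $e$ is closing in $c_y$ (the case closing in $c_x$ is symmetric, and the doubly-closing edge is precisely $e^*$, excluded from $G^*$), so $|x(u)-x(v)| = 1$ and $e$ lies inside a vertical strip $S$ of unit width.

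The crux of the proof is ruling out crossings inside $S$. No $M_3$ edge can cross $e$, since each such edge is vertical at a single $x$-coordinate and therefore either shares an endpoint with $e$ on the boundary of $S$ or lies entirely outside $S$. For $M_1$- and other $M_2$-edges, the key invariant I would invoke is that $e \neq e^*$ forces $c_y$ not to be the first BFS component of $\mathcal{H}$: a short check shows that the closing edge of any first BFS cycle component of $H_y$ must coincide with the closing edge of the adjacent (second BFS) $H_x$-component, and hence equals $e^*$. Consequently the construction rule for non-first cycle components of $H_y$ applies, which makes $u$ (the first vertex of $c_y$ in $\prec_y$) equal to the first vertex of $c_y$ in $\prec_x$; hence $x(u) \leq x(w)$ for every $w \in c_y$. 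Because $M_3$ is a matching, the only vertex that could share $u$'s $x$-coordinate is its $M_3$-partner, which cannot lie in $c_y$ without contradicting first place in $\prec_x$. Therefore every $w \in c_y \setminus \{u\}$ satisfies $x(w) \geq x(u)+1$, which in turn forces every $M_1$-edge of $c_y$ at an intermediate $y$-level of $S$ to have both endpoints at $x \geq x(u)+1$ and so lie strictly to the right of $e$ there, and similarly confines every other $M_2$-edge of $c_y$ to the same half-plane. This ``leftmost-endpoint'' invariant, engineered precisely by the tailored BFS traversal and the reversal of the $\prec_x$ ordering for $H_x$-components, is the hardest part of the argument and is what ultimately keeps $e$ crossing-free.
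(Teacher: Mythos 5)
Your strategy is essentially the paper's own---reduce to the $M_2$ edges, localize potential crossings to the components $c_y$ and $c_x$ containing $e$, and case-split on the orders in which $e$ is closing---and your closing-in-one-order case is handled correctly via the same ``first in $\prec_x$'' contradiction that the paper uses. The genuine gap is the doubly-closing case: you assert that ``the doubly-closing edge is precisely $e^*$'' as if it were a definition, but it is a claim that must be proved (it is the entire content of the paper's fourth case and is what yields \cref{cor:single-closing}). A priori, some cycle deep in the BFS could have a closing edge in $\prec_y$ that is simultaneously the closing edge of its $H_x$-component; such an edge is consecutive in neither order, spans both the full $y$-range of $c_y$ and the full $x$-range of $c_x$, and none of your other cases applies to it. Your ``short check'' only addresses the first BFS component. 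The missing argument is: the closing edge of a non-first cycle $c_y$ of $H_y$ is the $M_2$ edge at the vertex $u$ chosen as first of $c_y$ in $\prec_x$; that $u$ already lies in an $H_x$-component $c_x'$ processed before $c_y$, so this $M_2$ edge lies in $c_x'$; if it were also closing for its $H_x$-component, that component would be $c_x'$, which precedes $c_y$ in the BFS---contradicting (after the symmetric reduction) the assumption that $c_y$ precedes $c_x$.

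There is also a smaller flaw in your justification of the leftmost-endpoint invariant: you claim that $u$'s $M_3$-partner ``cannot lie in $c_y$ without contradicting first place in $\prec_x$,'' but a partner that \emph{follows} $u$ in $\prec_x$ could lie in $c_y$ without any contradiction. What actually forces $x(w)\geq x(u)+1$ for all $w\in c_y\setminus\{u\}$ is \cref{prop:consecutive}: since $e$ is not closing for $c_x$, the vertex immediately following $u$ in $\prec_x$ is $v$ itself and the $x$-coordinate increments there, so every vertex after $u$ has $x$-coordinate at least $x(u)+1$, while a vertex of $c_y$ preceding $u$ in $\prec_x$ would contradict the choice of $u$. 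With these two repairs your proof coincides with the paper's.
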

\begin{proof}
If $e \in M_1$ or $e \in M_3$, the statement follows from \cref{prop:consecutive} and the computed vertex coordinates. Hence, let $e=(u,v)$ be an edge of $M_2$ and let $c_y \in H_y$ and $c_x \in H_x $ be the two components of $\mathcal{H}$ containing $e$.
Suppose to the contrary that there is an edge $e' = (u',v')$ crossing $e$. If $e' \in M_1$, then both $u'$ and $v'$ belong to the same component $c_y' \in H_y$. If $c_y' \neq c_y$, then $e'$ cannot cross $e$ as the vertices of $c_y$ and $c_y'$ span different intervals of $y$-coordinates, hence $e'$ belongs to $c_y$. Similarly, if $e' \in M_3$, then $e'$ belongs to $c_x$. Finally, if $e' \in M_2$, then $u'$ and $v'$ belong to the same component in both $H_y$ and $H_x$; thus $e'$ belongs to both~$c_y$~and~$c_x$. 

\begin{enumerate}
    \item Edge $e$ is a closing edge for neither $c_y$ nor $c_x$: The vertices $u$ and $v$ are consecutive in both $\prec_y$ and $\prec_x$ by  \cref{prop:consecutive}.
     Hence, both their $x$- and $y$-coordinate differ by exactly one by construction. Since all vertices have integer coordinates, no horizontal or vertical edge is crossing $e$, hence $e' \in M_2$.
     Observe that since the $y$- (the $x$-) coordinate of the vertices in $c_y$ (in $c_x$) are non-decreasing along the walk defining its order in $\prec_y$ (in $\prec_x$), no crossing between $e$ and $e'$ can occur if $e'$ is not a closing edge of $c_y$ or $c_x$, which will be covered in the next cases (by swapping the roles of $e$ and $e'$).
    \item Edge $e$ is a closing edge for $c_y$ but not for $c_x$: Note that $c_y$ is not the first component in the BFS traversal, since the closing edge of this component is $e^*$. Further, $u$ and $v$ are consecutive in $\prec_x$, but not in $\prec_y$. We assume that $u$ directly precedes $v$ in $\prec_x$, which implies that $u$ is the first vertex of $c_y$ in $\prec_y$, while $v$ is the last.
    It follows that their $x$-coordinates differ by exactly one, hence $e'$ cannot belong to $M_3$. If $e'$ belongs to $M_1$, then one of $u'$ or $v'$, say $u'$, has $x$-coordinate smaller or equal to the one of $u$. Since $u$ and $v$ are consecutive in $\prec_x$, we have necessarily that $u'$ precedes $u$ in $\prec_x$, which is a contradiction to the choice of $u$ since both $u'$ and $v'$ belong to $c_y$. In fact, $u$ was chosen as the starting point of the walk, when considering $c_y$, as the first vertex of $c_y$ in $\prec_x$, hence $e' \in M_2$. Since both endpoints of $e'$ belong to both $c_y$ and $c_x$, then one of $u'$ or $v'$, say $u'$, has $x$-coordinate smaller or equal to the one of $u$. Since $u$ and $v$ are consecutive in $\prec_x$, we have necessarily that $u' \prec_x u$, which is a contradiction to the choice of $u$ since both $u'$ and $v'$ belong to $c_y$.
    
    \item Edge $e$ is a closing edge for $c_x$ but not for $c_y$: This case is analogous to the previous one.
    \item\label{case:closing}Edge $e$ is a closing edge for both $c_y$ and $c_x$: Observe that
    neither $c_y$ nor $c_x$ is the first component in the BFS traversal of $\mathcal{H}$, since $e^*$ is the closing edge of this component, which is not part of $G^*$. Recall that by definition, the vertices $v_{c_y}$ and $v_{c_x}$ corresponding to $c_y$ and $c_x$ in $\mathcal{H}$ are adjacent. Assume that $c_y$ is visited before $c_x$ in the BFS traversal; the other case is symmetric. By our construction rule, when considering $c_y$, we started the walk from the vertex $u$ that is the first vertex of $c_y$ in $\prec_x$, which means $u$ also belongs to a component $c_x'$ of $H_x$. Since $c_y$ is a cycle, $u$ is incident to an edge in $M_2$, which then also belongs to $c_x'$. Clearly, the edge of $M_2$ incident to $u$ is the closing edge of $c_y$ and contained in $c_x'$, which implies that the edge does not belong to $c_x$, hence this case does not occur in $G^*$.\qedhere
    \end{enumerate}
\end{proof}

By the last case of~\cref{lem:construction}, it follows that if the edge $e^*$ exists, then it is the only closing edge of two components, which is summarized in the following corollary.
\begin{corollary}\label{cor:single-closing}
There is at most one edge in $M_2$ that is a closing edge for two components.
\end{corollary}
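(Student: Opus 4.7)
The plan is to extract the statement directly from Case~\ref{case:closing} of the proof of \cref{lem:construction}, since that case already carries essentially all the needed argument; the corollary is just packaging what was shown there.

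First I would argue by contradiction: suppose $e = (u,v) \in M_2$ is a closing edge for two components, one of $H_y$ (say $c_y$) and one of $H_x$ (say $c_x$), and that $e \neq e^*$. I would then re-run the reasoning of Case~\ref{case:closing} on $e$. WLOG $c_y$ is visited before $c_x$ in the BFS traversal of $\mathcal{H}$. By the construction rule, when $c_y$ was processed we started the walk from the vertex $u$ that is the first vertex of $c_y$ in $\prec_x$; in particular $u$ already lay in some $H_x$-component $c_x'$ that is the parent of $c_y$ in the BFS tree. The $M_2$-edge incident to $u$ along that walk is exactly the closing edge of $c_y$, i.e.\ $e$, so $e \in c_x'$. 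Hence $e \in c_x \cap c_x'$, which forces $c_x = c_x'$; but $c_x'$ is the parent of $c_y$ in the BFS tree, while the assumption $e \neq e^*$ means we are not in the exceptional root configuration, so $c_x$ cannot be that parent in a way that makes $e$ its closing edge — exactly as spelled out in Case~\ref{case:closing}, this contradicts the choice of $u$ as the first vertex of $c_y$ in $\prec_x$.

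Second, I would explain why the only escape from this contradiction is the edge $e^*$. The argument above relies on $c_y$ having a parent in the BFS traversal of $\mathcal{H}$, which fails only when $c_y$ is the very first component visited. By construction the first two visited components are precisely the components $c$ (of $H_y$) and $c'$ (of $H_x$) that share the origin vertex of $G$, and the unique $M_2$-edge incident to the origin vertex is $e^*$. Hence $e^*$ is the only possible edge that can simultaneously be a closing edge for a component of $H_y$ and a component of $H_x$, proving the corollary.

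I do not expect any real obstacle here; the only care needed is to make explicit that the exceptional configuration identified in Case~\ref{case:closing} occurs at most once — a consequence of the BFS traversal having exactly one starting vertex — so that uniqueness, not merely existence of the exception, is established.
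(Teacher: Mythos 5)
Your proposal is correct and takes essentially the same route as the paper, which derives the corollary directly from the last case of \cref{lem:construction}: any edge of $M_2$ closing two components would have to be the $M_2$-edge incident to the first vertex (in the other order) of the later-visited component, and hence already lie in an earlier-visited component of the other type, a contradiction that can only be escaped at the unique origin vertex, i.e., by $e^*$. The only cosmetic point is that the decisive contradiction is cleanest when stated as $c_x=c_x'$ forcing $c_x$ to precede $c_y$ in the BFS traversal, against your WLOG assumption (equivalently, as in the paper, $e\in c_x'\neq c_x$ contradicts $e$ being a closing edge of $c_x$); your phrasing gestures at this but the underlying argument is sound.
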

We now describe how to add the edge $e^*$ to $\Gamma^*$ if such an edge exists to obtain the final drawing $\Gamma$.
Let $u$ and $v$ be the endvertices of $e^*$ with $u$ being the origin vertex of $G$. By construction, $u$ and $v$ are in the first two components $c$ and $c'$ of the BFS traversal of $\mathcal{H}$. Since $u$ is the first vertex in $\prec_y$, its $y$-coordinate is $1$, i.e., $u$ is the bottommost vertex of $\Gamma^*$. Also, since $u$ is the first vertex of $c'$ in $\prec_y$, it is incident to the closing edge of $c'$ and $c$ by definition, in particular, this edge is $e^*$. Note that this implies that the $x$-coordinate of $v$ is $1$, so $v$ is the leftmost vertex of $\Gamma^*$ and the first vertex in $\prec_x$.
This ensures that $v$ can be moved to the left and $u$ to the bottom in order to draw the edge $e^*$ crossing-free. In particular, moving $v$ by $n$ units to the left and $u$ by $n$ units to the bottom we can guarantee that $e^*$ does not intersect the first quadrant $\mathbb{R}_+^2$, while by construction any other edge (not incident to $u$ or $v$) lies in $\mathbb{R}_+^2$. Since $e^*$ is the only edge of $M_2$ incident to $u$ and $v$, it remains to consider the edges of $M_1$ and $M_3$ incident to $u$ or $v$. Observe that the edge of $M_1$ incident to $v$ remains horizontal, while the edge of $M_3$ incident to $u$ remains vertical. Finally, the edge of $M_1$ incident to $u$ is crossing free in $\Gamma^*$, since there is no vertex below it, hence it remains crossing-free after moving $u$ to the bottom. Similarly, the edge of $M_3$ incident to $v$ is crossing free in $\Gamma^*$, since there is not vertex to the left of it, hence it remains crossing-free after moving $v$ to the left. Together with \cref{lem:construction} we obtain that $\Gamma$ is a RAC drawing of $G$. 
We complete the proof of \cref{thm:deg-3-col} by discussing the time complexity and the required area.
We construct the components of $H_y$ and $H_x$ based on the given edges-coloring using BFS in $\mathcal{O}(n)$ time. 
To define $\prec_y$ and $\prec_x$, we choose the origin vertex $u$ and the components $c$ and $c'$ for the start of the BFS of $\cal H$ in linear time. We then traverse every edge of $G$ at most twice. Hence, this step takes $\mathcal{O}(n)$ time in total. Assigning the vertex coordinates, by first iterating through $\prec_y$ and $\prec_x$ and then possibly moving the end-vertices of $e^*$, clearly takes $\mathcal{O}(n)$ time again, hence the drawing can be computed in linear time.

For the area, we observe that the initial $x$- and $y$-coordinates for all the vertices range between $1$ and $n$. Since we possibly move the origin vertex $u$ and its $M_2$ neighbor $v$ by $n$ units each, the drawing 
area is at most $2n \times 2n$.

We conclude this section by mentioning two results that form generalizations of our approach of \cref{thm:deg-3-col}. In this regard, we need the notion of oddness of a bridgeless $3$-regular graph, which is defined as the minimum number of odd cycles in any possible $2$-factor of~it. \cref{thm:oddness-2} is limited to oddness-2, while \cref{thm:oddness-k} provides an upper bound on the number of edges requiring one bend that is linear in the oddness; \mbox{their proofs are in the appendix.}

\begin{toappendix}
This section is devoted to the proofs of Theorems~\ref{thm:oddness-2} and~\ref{thm:oddness-k}. We start with the formal definition of oddness, which is limited to given in the following definition.

\begin{definition}[Huck and Kochol~\cite{huck1995five}]\label{def:oddness}
The oddness of a $3$-regular graph is the minimum number of odd cycles in a $2$-factor of it.
\end{definition}

\noindent Before we proceed with the formal proof of Theorems~\ref{thm:oddness-2} and~\ref{thm:oddness-k}, we introduce two auxiliary properties that hold for every graph with oddness $k \geq 2$. 

\begin{prop}\label{prop:k-m4}
Given a bridgeless $3$-regular with oddness $k \geq 2$, it is possible to decompose $G$ into four matchings $M_1$, $M_2$, $M_3$ and $M_4$, such that 
\begin{enumerate*}[label={(\roman*)}, ref=(\roman*)]
\item one of them is perfect, say $M_1$, \label{prp:k-m4-1}
\item another one, say $M_4$, has exactly $k$~edges $e_1,\ldots,e_k$, and \label{prp:k-m4-2}
\item $M_2\cup M_3 \cup M_4$ yield a $2$-factor having only $k$ odd cycles $c_1,\ldots,c_k$, such that $e_i$ can be arbitrarily chosen from the odd cycle $c_i$.\label{prp:k-m4-3}
\end{enumerate*}
\end{prop}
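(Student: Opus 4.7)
The plan is to exploit the fact that in a bridgeless cubic graph, 2-factors and perfect matchings are in bijection via complementation, and then carefully split each cycle of a minimum-oddness 2-factor into matchings.

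First I would invoke Petersen's theorem to obtain a perfect matching of $G$, but more importantly, I would pick the perfect matching $M_1$ whose complement $F := E(G) \setminus M_1$ realizes the oddness. Concretely, by definition of oddness there exists a 2-factor of $G$ containing exactly $k$ odd cycles; since $G$ is $3$-regular, the complement of any 2-factor is a $1$-regular spanning subgraph, i.e.\ a perfect matching. Let $M_1$ be that perfect matching, so $F$ consists of even cycles together with exactly $k$ odd cycles $c_1, \ldots, c_k$. This immediately gives property~\ref{prp:k-m4-1}.

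Next I would build $M_4$ by selecting, for each odd cycle $c_i$, one arbitrary edge $e_i$ of $c_i$ and setting $M_4 := \{e_1, \ldots, e_k\}$. Since the cycles of $F$ are vertex-disjoint and we pick at most one edge from each, $M_4$ is indeed a matching of exactly $k$ edges, yielding~\ref{prp:k-m4-2}. The arbitrariness in the choice of $e_i$ inside $c_i$ is built into this step, so the ``arbitrarily chosen'' clause of~\ref{prp:k-m4-3} is free.

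For property~\ref{prp:k-m4-3}, I would now $2$-color the remaining edges of $F$ to obtain $M_2$ and $M_3$. Each even cycle of $F$ admits a proper $2$-edge-coloring along its natural traversal, contributing alternating edges to $M_2$ and $M_3$. Each odd cycle $c_i$, once $e_i$ is removed, becomes a path of even length, which also admits such a proper $2$-edge-coloring; I assign its edges alternately to $M_2$ and $M_3$. I would then verify that $M_2$ and $M_3$ are matchings: every vertex has degree $2$ in $F$, and the alternation ensures that at any vertex not incident to some $e_i$ one incident edge is in $M_2$ and the other in $M_3$, while at an endpoint of $e_i$ one incident edge is in $M_4$ and the other in $M_2$ or $M_3$. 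Thus $M_2 \cup M_3 \cup M_4 = F$ is indeed a $2$-factor whose odd cycles are exactly $c_1, \ldots, c_k$, giving~\ref{prp:k-m4-3}.

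The only substantive point is the very first step: passing from the abstract definition of oddness to an actual perfect matching whose complementary $2$-factor attains it. This is not really an obstacle since in a cubic graph the correspondence ``pick a perfect matching $\leftrightarrow$ pick a $2$-factor'' is a tautology via complementation, but it is the one place where the structural assumption (bridgeless, $3$-regular) is used, through Petersen's theorem to guarantee that such objects exist. All remaining steps are purely combinatorial bookkeeping about edge colorings of even cycles and of paths of even length.
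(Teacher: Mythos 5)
Your proof is correct and follows essentially the same route as the paper: take the perfect matching whose complementary $2$-factor attains the oddness, pick one arbitrary edge per odd cycle to form $M_4$, and $2$-color the remaining (even) cycles and even-length paths alternately into $M_2$ and $M_3$. The only difference is that you spell out the alternating-coloring bookkeeping and the appeal to Petersen's theorem explicitly, which the paper leaves implicit.
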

\begin{proof}
Since $G$ has oddness $k$, there exists a perfect matching $M_1$ in $G$ such that the $2$-factor $G \setminus M_1$ has exactly $k$ odd cycles $c_1,\ldots,c_k$. This guarantees Properties~\ref{prp:k-m4-1} and~\ref{prp:k-m4-2}.  This $2$-factor is decomposable into three matchings $M_2$, $M_3$ and $M_4$, such that each odd cycle $c_i$ contains exactly one edge $e_i$ in $M_4$, which can be arbitrarily chosen. This proves Property~\ref{prp:k-m4-3}. Thus, the overall cardinality of $M_4$ is exactly $k$ and the proof is concluded.
\end{proof}

\noindent For the special case in which the oddness is $2$ we can further strengthen \cref{prop:k-m4} as follows.

\begin{prop}\label{prop:2-m4}
Given a bridgeless $3$-regular graph with oddness $2$, it is possible to decompose $G$ into four matchings $M_1$, $M_2$, $M_3$ and $M_4$, such that
\begin{enumerate*}[label={(\roman*)}, ref=(\roman*)]
\item\label{prp:1}$M_1$ is a perfect matching,
\item\label{prp:2}$M_2$, $M_3$ and $M_4$ yield a $2$-factor having only two odd cycles $c$ and $c'$, and
\item\label{prp:3} $c$ and $c'$ contain two vertices $u$ and $u'$, such that there is a path $P(u,u')$ between $u$ and $u'$ that consists exclusively of edges in $M_1 \cup M_2$ and $P(u,u')$ does not contain any vertex of $c$ or $c'$ other than $u$ and $u'$.
\end{enumerate*}
\end{prop}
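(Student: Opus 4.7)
The plan is to invoke Proposition~\ref{prop:k-m4} with $k=2$ to obtain an initial decomposition $(M_1, M_2, M_3, M_4)$ satisfying~(i) and~(ii), and then exploit the residual freedom in the construction---namely the choice of the $M_4$-edges inside each odd cycle and the choice of the two-coloring within each $M_2\cup M_3$-component---to additionally arrange property~(iii).

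A key structural observation I would use is that every edge leaving $c$ or $c'$ lies in $M_1$: indeed, $M_2 \cup M_3 \cup M_4$ is the $2$-factor $G \setminus M_1$, so every non-$M_1$ edge is a cycle edge and is therefore internal to a single cycle of the $2$-factor. Consequently, the first and last edges of the desired path $P(u, u')$ are forced to be $M_1$-edges, while the only matching-ambiguous edges along $P$ are the cycle edges of the even cycles traversed by $P$; such edges must be recolored into $M_2$.

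Next, I would contract each cycle of the $2$-factor to a single node, yielding a multigraph $\widetilde G$ whose edges are exactly the non-chord $M_1$-edges of $G$. Since $G$ is connected, so is $\widetilde G$, and any shortest path in $\widetilde G$ from the contracted node of $c$ to that of $c'$ lifts to a candidate path in $G$ that visits a sequence of even cycles $d_{j_1}, \ldots, d_{j_m}$, entering each $d_{j_i}$ at some vertex $w_i^{\mathrm{in}}$ and leaving at some $w_i^{\mathrm{out}}$. To keep the path inside $M_1 \cup M_2$, these two vertices must either coincide or be joined by a single $M_2$-edge of $d_{j_i}$; the latter is achievable by appropriately flipping the two-coloring of $d_{j_i}$ whenever $w_i^{\mathrm{in}}$ and $w_i^{\mathrm{out}}$ are adjacent on $d_{j_i}$. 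The choices of $e_1 \in c$, $e_2 \in c'$, and the two-colorings of the alternating paths $c \setminus \{e_1\}$ and $c' \setminus \{e_2\}$ are then made so that $u$ and $u'$ sit at the correct ends of their respective $M_2$-colored segments.

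The main obstacle I anticipate is the case in which, for some even cycle $d_{j_i}$, the entry vertex $w_i^{\mathrm{in}}$ and the exit vertex $w_i^{\mathrm{out}}$ are neither equal nor adjacent on $d_{j_i}$: then no single $M_2$-edge can bridge them, since $M_2 \cap E(d_{j_i})$ is a perfect matching of $d_{j_i}$. To handle this, I would use that $G$ is bridgeless and $3$-regular, so between any two cycles of the $2$-factor there are either zero or at least two $M_1$-edges available; this gives room to reroute the shortest $\widetilde G$-path by swapping pairs of incident $M_1$-edges at problematic cycles. A local exchange argument would iteratively adjust the lifted path, decreasing a suitable potential (e.g., the sum over visited cycles of the cyclic distance between consecutive entry and exit points) until every entry/exit pair is either coincident or adjacent on its cycle; the resulting decomposition then satisfies all three properties.
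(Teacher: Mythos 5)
There is a genuine gap in the final step of your argument. Your reduction to the contracted multigraph $\widetilde G$ correctly isolates the difficulty, but then fails to resolve it. Note first that since $M_1$ is a perfect matching, a path entering an intermediate even cycle $d$ at $w^{\mathrm{in}}$ via an $M_1$-edge cannot leave $d$ at $w^{\mathrm{in}}$ again (there is no second $M_1$-edge there), so the ``coincide'' case cannot occur: the path is forced to move to the unique $M_2$-neighbour of $w^{\mathrm{in}}$ on $d$ and exit there. Hence the exit vertex is dictated by $w^{\mathrm{in}}$ and the two-colouring of $d$, and in general it is not the endpoint of the $M_1$-edge that your shortest $\widetilde G$-path wants to use next. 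Your proposed repair rests on the claim that between any two cycles of the $2$-factor there are either zero or at least two $M_1$-edges; this is false (take three cycles of the $2$-factor pairwise joined by single $M_1$-edges, the remaining vertices matched by chords: the graph is bridgeless and $3$-regular, yet every pair of cycles is joined by exactly one edge). Bridgelessness only lower-bounds the cut around a \emph{single} cycle, not pairwise multiplicities. Moreover, the ``local exchange decreasing a potential'' is asserted rather than proved: no concrete swap is exhibited, and there is no argument that the cyclic-distance potential decreases monotonically or that the process terminates with a valid path rather than cycling or revisiting cycles.

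The paper avoids the routing problem entirely with a parity/degree argument that you could adopt. Choose the $M_4$-edge $(a,b)$ of $c$ and the $M_2/M_3$-alternation along $c\setminus(a,b)$ so that $a$ is incident to the $M_3$-cycle-edge and $b$ to the $M_2$-cycle-edge (possible because $c\setminus(a,b)$ has even length), and do the same on $c'$ with $(a',b')$. Then every vertex other than $a$ and $a'$ has degree exactly $2$ in $M_1\cup M_2$, while $a$ and $a'$ have degree $1$. A graph of maximum degree $2$ with exactly two vertices of degree one is a disjoint union of cycles and a single path joining those two vertices, so a path $Q$ from $a\in c$ to $a'\in c'$ inside $M_1\cup M_2$ exists automatically, with no need to route through $\widetilde G$ or to repair entry/exit mismatches. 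Taking the subpath $P$ of $Q$ between the last occurrence $u$ of a vertex of $c$ and the first occurrence $u'$ of a vertex of $c'$ along $Q$, and then locally re-assigning the cycle edges of $c$ and $c'$ to $M_2$, $M_3$, $M_4$ so that $u$ and $u'$ take over the roles of $a$ and $a'$, yields property~(iii).
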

\begin{proof}
By \cref{prop:k-m4}, it follows that $G$ can be decomposed into four matchings $M_1$, $M_2$, $M_3$ and $M_4$, such that $M_1$ is a perfect matching while $M_4$ contains exactly two edges $(a,b)$ and $(a',b')$ belonging to the two odd cycles $c$ and $c'$, respectively. Consider $c \setminus (a,b)$. By definition, the edges of $M_2$ and $M_3$ are alternating when walking along $c \setminus (a,b)$ from $a$ to $b$ and since the walk has even length it follows that $a$ is incident to an edge of $M_3$ and $b$ is incident to an edge of $M_2$ or vice-versa; assume w.l.o.g.\ the former holds. Symmetrically, we can prove that $a'$ and $b'$ are incident to two edges in $M_3$ and $M_2$, respectively. Hence, any vertex in $G$ besides $a$ and $a'$ has degree exactly $2$ in $M_1 \cup M_2$. Since $a$ and $a'$ have degree $1$ in $M_1 \cup M_2$, it follows that there is a path $Q$ connecting $a$ and $a'$ in $M_1 \cup M_2$. Let $u$ ($u'$) be the last (first) occurrence of a vertex of $c$ ($c'$) along $Q$ from $a$ to $a'$. Note that $u$ and $u'$ may possibly coincide with $a$ and $a'$ if $(a,a') \in M_1$. It follows that the subpath $P$ of $Q$ from $u$ to $u'$ contains only two vertices of $c$ and $c'$, namely, its endpoints $u$ to $u'$. To complete the proof, we locally change the assignment of the edges of $c$ and $c'$ to $M_2$, $M_3$ and $M_4$ such that $u$ and $u'$ will be incident to an edge of $M_3$ and an edge of $M_2$ each. This way we  guarantee that $P$ is actually a path in $M_1 \cup M_2$ completing the proof.
\end{proof}
\end{toappendix}

\begin{theorem2rep}\apxmark\label{thm:oddness-2}
Every bridgeless $3$-regular graph with oddness $2$ admits a RAC drawing in quadratic area which can be computed in subquadratic time.
\end{theorem2rep}
\begin{proof}
To prove the theorem, we make use of \cref{prop:2-m4}, which allows us to assume that $G$ is decomposed into four matchings $M_1$, $M_2$, $M_3$ and $M_4$, such that $M_1$ is perfect and $M_4$ contains only two edges. Also, the union of $M_2$, $M_3$ and $M_4$ is a $2$-factor containing two odd cycles, which we denote by $c$ and $c'$. Based on path $P$ from $u \in c$ to $u'\in c'$, we construct a colorable graph $G'$ as follows. Let $\{u, w\} \in M_4$ and $\{u', w'\} \in M_4$. We substitute these edges by two paths $u \rightarrow v \rightarrow w$ and $u' \rightarrow v' \rightarrow w'$ that contain dummy vertices $v$ and $v'$ respectively. The edges $\{u, v\}$ and $\{u', v'\}$ are added to $M_2$ while the edges $\{v, w\}$ and $\{v', w'\}$ are added to $M_3$.

It follows that $G'$ is a $3$-edge-colorable degree-$3$ graph. Thus, $G'$ is then drawn as described in \cref{thm:deg-3-col} by identifying $H_y$ to be the subgraph induced by $M_1 \cup M_2$ and $H_x$ to be the subgraph induced by $M_2 \cup M_3$. Also, we set $u$ to be the origin vertex of the BFS traversal, which places the cycle $c$ at the left side of the drawing, since the vertices of $c$ are the first vertices considered in $\prec_y$. Afterwards, we reverse the internal order of $c'$ in $\prec_x$ (which is equivalent to flipping the cycle in the drawing along the $y$-axis) and move $c'$ to the right side of the drawing, such that $\forall c \in H_x$ it holds that all vertices of $c$ are before the vertices of $c'$ with respect to $\prec_x$. Let $\Gamma'$ be the resulting drawing. We prove in the following lemma that $\Gamma'$ is a RAC drawing of graph $G'$.

\begin{lemma}
$\Gamma'$ is a RAC drawing of graph $G'$.
\end{lemma}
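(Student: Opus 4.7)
The plan is to apply Theorem~\ref{thm:deg-3-col} to obtain an initial RAC drawing $\Gamma_0$ of $G'$ and then argue that the two post-processing operations that produce $\Gamma'$ from $\Gamma_0$ preserve the RAC property.

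First, I would observe that $G'$ is a $3$-edge-colorable degree-$3$ graph (the two subdivision vertices each contribute one $M_2$- and one $M_3$-edge), so Theorem~\ref{thm:deg-3-col} applies. With $u$ chosen as origin vertex, the standard construction yields a drawing $\Gamma_0$ in which $M_1$-edges are horizontal, $M_3$-edges are vertical, and $M_2$-edges are straight and crossing-free; moreover, by the rules of the construction, the vertices of $c$ occupy the leftmost columns of $\Gamma_0$.

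Next, I would analyze the two modifications. The key structural observation is that the only edges of $G'$ crossing the boundary of $c'$ are $M_1$-edges: since $c'$ is a connected component of $H_x = M_2 \cup M_3$, no $M_2$- or $M_3$-edge has exactly one endpoint in $c'$. Thus the subdrawing of $c'$ is attached to the rest of $\Gamma_0$ only through horizontal segments. Reflecting this subdrawing about a vertical axis preserves $y$-coordinates, so the attached $M_1$-edges remain horizontal, internal $M_3$-edges remain vertical, and each internal crossing between a horizontal $M_1$-edge and a vertical $M_3$-edge is still at $90^\circ$. Translating the (reflected) subdrawing horizontally rightward again preserves $y$-coordinates and the relative shapes of segments inside $c'$; it merely lengthens the attached $M_1$-edges, whose $90^\circ$ crossings with vertical $M_3$-edges are preserved.

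The main obstacle is to verify that the reflection and translation introduce no new crossings, particularly between an $M_2$-edge internal to $c'$ and an $M_1$-edge crossing into $c'$. I plan to handle this by reinterpreting the two modifications as a single modification of $\prec_x$: namely, traversing the cyclic walk of $c'$ in the reverse direction and postponing $c'$ to be the last $H_x$-component in the BFS of $\mathcal{H}$. Under this modified $\prec_x$, consecutive pairs within every $H_x$-component are unchanged, the closing edge of $c'$ remains the same $M_2$-edge, and no cross-component $M_2$- or $M_3$-edge exists to break Proposition~\ref{prop:consecutive}. Consequently, the case analysis of Lemma~\ref{lem:construction} applies verbatim; in particular, Case~\ref{case:closing} still identifies $e^*$ as the unique doubly-closing edge, and every $M_2$-edge of $G'$ remains crossing-free in $\Gamma'$. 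Together with the preserved orthogonality of $M_1$-$M_3$ crossings, this certifies that $\Gamma'$ is a RAC drawing of $G'$.
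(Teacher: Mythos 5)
Your overall strategy mirrors the paper's: classify the edges affected by the flip-and-translate of $c'$ and check each class ($M_1$-edges stay horizontal, internal $M_3$-edges stay vertical, $M_2$-edges must stay crossing-free), and your observation that only $M_1$-edges leave $c'$ (since $c'$ is a component of $H_x=M_2\cup M_3$) is exactly the structural fact the paper relies on. The gap is in your final step, where you claim that after reinterpreting the two operations as a modification of $\prec_x$ ``the case analysis of \cref{lem:construction} applies verbatim.'' It does not. The proof of Case~2 of \cref{lem:construction} (the closing edge $e=(u,v)$ of an $H_y$-cycle $c_y$) does not rest only on \cref{prop:consecutive}; it crucially uses that $u$, the starting vertex of the walk defining $c_y$'s order in $\prec_y$, was chosen as the \emph{first} vertex of $c_y$ in $\prec_x$, so that no horizontal $M_1$-edge of $c_y$ can have an endpoint with $x$-coordinate at most $x(u)$. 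That choice was made with respect to the \emph{original} $\prec_x$, and $\prec_y$ is not recomputed when you modify $\prec_x$. For an $H_y$-cycle $c_y$ whose closing edge lies inside $c'$ (both endpoints of a closing edge of $c_y$ lie in one $H_x$-component, since $M_2\subseteq H_x$), moving $c'$ to the far right destroys this minimality: $u$ is now to the right of every vertex of $c_y\setminus c'$, and a stretched horizontal $M_1$-edge of $c_y$ with one endpoint outside $c'$ and one inside can cross $e$. This is precisely the crossing you flagged as the ``main obstacle,'' and it is the reason the flip is needed at all.

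What actually closes this case is the symmetric extremality that the flip restores: since $u$ was $\prec_x$-minimal in $c_y$, hence in particular $\prec_x$-minimal among $c_y\cap c'$, reversing the internal order of $c'$ and placing all of $c'$ after every other vertex makes $u$ the $\prec_x$-\emph{maximal} vertex of $c_y$, so $e$ becomes the rightmost edge of its $H_y$-component and the argument of Case~2 goes through with ``first/smaller'' replaced by ``last/greater.'' You need to state and prove this reversed extremality explicitly rather than invoking \cref{lem:construction} unchanged; with that addition your proof coincides with the paper's.
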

\begin{appproof}
Let $e$ be an edge that is incident to a vertex of $c'$. The following three cases can arise since moving $c'$ can affect vertical, horizontal and diagonal edges.
First assume that $e \in M_1$ is horizontal. Moving $c'$ along the $x$-axis does not change the $y$-alignment of the endvertices of $e$ and the edge stays horizontal. Next, assume that such an edge $e \in M_3$ is vertical. Then $e \in M_1$ is an internal edge of $c'$ which means that both endpoints of $e$ are part of $c'$. Thus moving $c'$ along the $x$-axis keeps both endpoints aligned such that $e$ stays vertical. In order for $\Gamma'$ to be a valid RAC drawing, the edges in $M_2$ are required to be crossing-free. If an edge $e \in M_2$ is not a closing edge in $H_y$ or $H_x$, then it is consecutive in both $\prec_y$ and $\prec_x$. This property is not altered by moving and flipping $c'$. Otherwise, $e$ is either a closing edge in $H_y$ or $H_x$ but not both, since $u$ was chosen as the origin vertex. If $e$ is a closing edge in $H_x$, then it was chosen with the minimal $\prec_y$ value. This property is still valid after moving along the $x$-axis and flipping along the $y$-axis, since the vertical positions of the vertices in $c'$ are not changed. If $e$ is a closing edge in $H_y$, it was chosen with the minimal $\prec_x$ value. After moving $c'$, $e$ could potentially be involved in crossing as $e$ is not the leftmost edge in its component $c$ in $H_y$ anymore. More precisely, it is to the right of every vertex in $c$ which is not in $c'$ and it is to the left of every vertex in $c$ which is also in $c'$. To avoid this, we flip $c'$, i.e., we mirror the drawing of $c'$ along the vertical axis. Thus, $e$ becomes the rightmost edge in $c$ and is therefore planar.
\end{appproof}

To complete the proof of \cref{thm:oddness-2}, we prove in the following that a valid RAC drawing for $G$ can be derived from the RAC drawing $\Gamma'$ of $G'$. We argue as follows. The inserted paths are contracted to retrieve $M_4$ by merging $v$ with $u$ at the coordinate $(x(u), y(u))$ and merging $v'$ with $u'$ at the coordinate $(x(v'), y(u'))$. This results in a diagonal drawing of the edge $e \in M_4$ incident to $u$ and the edge $e' \in M_2$ incident to $u'$. Since $u$ is the starting vertex for the BFS traversal, $P$ is positioned at the bottom of the drawing and is only connected to other $H_y$ components over vertical edges. $P$ can thus be moved downwards. Furthermore $e$ is connected to the leftmost vertex $w$ which can be moved to the left in order to guarantee that $e$ is crossing free. Similarly by the construction, $e'$ is incident to the rightmost vertex, which can be moved to the right to guarantee correctness of the algorithm. It remains to discuss the area and the time complexity. \cref{thm:deg-3-col} yields a drawing in quadratic area. It is sufficient to move the vertices $w$ and $w'$ by at most $n$ units to the left or right, respectively and the vertices of $P$ downwards by $n$ units in order to draw the edges $(u,w)$ and $(u',w')$, hence we maintain quadratic area. For the time complexity, we observe that the running time of \cref{prop:2-m4} is dominated by finding a perfect matching in a cubic bridgeless graph, which can be done in subquadratic time (see e.g. \cite{BIEDL2001110}), while the later operations only require linear time. Applying \cref{thm:deg-3-col} and post processing clearly takes linear time, as we have the $3$-edge coloring given, which concludes the proof.
\end{proof}

\begin{theorem2rep}\apxmark\label{thm:oddness-k}
Every bridgeless $3$-regular graph with oddness $k \geq 2$ admits a $1$-bend RAC drawing in quadratic area where at most $k$ edges require one bend.
\end{theorem2rep}
\begin{proof}
Let $G$ be a bridgeless $3$-regular graph with oddness $k\geq 2$. It follows that $G$ is $4$-edge-colorable. In order to apply the algorithm of \cref{thm:deg-3-col}, we transform $G$ to a $3$-edge-colorable degree-$3$ graph $G'$ by substituting each edge $(u, w) \in M_4$ (recall that $|M_4|$ is equivalent to the oddness of $G$ by \cref{prop:k-m4}) by a length-$2$ path $u \rightarrow v \rightarrow w$, where $v$ is a new vertex. Then, $G'$ is $3$-edge-colorable, since we have two different colors at $u$ and $w$ left such that the edges incident to $v$ can be properly colored. Afterwards, the algorithm of \cref{thm:deg-3-col} yields a straight-line drawing of $G'$. The 1-bend drawing of $G$ is obtained by converting each newly introduced $v$ to a bend for the original edge $(u,w)$.
\end{proof}

\section{$\mathbf{1}$-Bend RAC Drawings of Degree-$\mathbf{4}$ graphs}
\label{sec:degree-4}

In this section, we focus on degree-$4$ graphs and show that they admit $1$-bend RAC drawings.

\begin{theorem}\label{thm:degree-4-one}
Given a  degree-$4$ graph $G$ with $n$ vertices, it is possible to compute in $O(n)$ time a $1$-bend RAC drawing of $G$ with $O(n^2)$ area.
\end{theorem}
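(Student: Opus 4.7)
The plan is to apply \cref{thm:2-factors} with $\Delta = 4$ (so $d = 2$) to decompose $E(G)$ into two edge-disjoint $2$-factors $F_1$ and $F_2$, each a vertex-disjoint union of cycles. Since every vertex has two incident edges in each $F_i$, its four incident edges can be matched with the four orthogonal ports $N$, $E$, $S$, $W$ of the drawing; the vertex placement and the edge routings have to realise this matching consistently.

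I would first adapt the ordering construction of \cref{sec:degree-3}, with the cycles of $F_1$ and $F_2$ playing the roles previously played by the connected components of $H_y$ and $H_x$. More precisely, I build a bipartite auxiliary graph $\mathcal{H}$ whose two parts are the cycles of $F_1$ and of $F_2$ and whose edges record vertex-sharing, and a BFS of $\mathcal{H}$ from a well-chosen origin vertex produces two total orders $\prec_y$ and $\prec_x$ such that the vertices of every cycle of $F_i$ appear consecutively in the corresponding order, with a single closing edge per cycle. Each vertex $v$ is then placed at the integer grid point whose $y$-coordinate is its rank in $\prec_y$ and whose $x$-coordinate is its rank in $\prec_x$, yielding an $O(n)\times O(n)$ grid.

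Every edge is drawn as an L-shape consisting of one horizontal and one vertical segment meeting at a right-angle bend, which immediately guarantees that any two crossing edges meet at $90^\circ$. For a non-closing edge of $F_1$ the two endpoints are consecutive in $\prec_y$, so I use a short vertical stub (port $N$ or $S$) at one endpoint and a horizontal segment (port $E$ or $W$) at the other; non-closing edges of $F_2$ are routed symmetrically via a short horizontal stub and a vertical segment. Orienting every cycle of $F_i$ gives a natural rule for which of the two incident edges takes the short ``stub'' port at a given vertex; together, the two $2$-factors use all four ports $N$, $E$, $S$, $W$ exactly once at each degree-$4$ vertex, so no port is shared by two edges.

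The main obstacle will be the routing of the closing edges: their endpoints are not consecutive in the corresponding ordering and their naive L-shape routings could collide with already-assigned ports at the first and last cycle vertex. Following the approach used for the edge $e^*$ in the proof of \cref{thm:deg-3-col}, I would exploit the freedom in the choice of the BFS origin so that the endpoints of the few ``unsafe'' closing edges end up on the boundary of the current bounding box; each such endpoint can then be translated by $O(n)$ units outside the grid to open a single-bend corridor, preserving the $O(n^2)$ area bound. Linear running time then follows from \cref{thm:2-factors} together with a single BFS traversal.
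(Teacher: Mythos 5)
Your high-level strategy (two $2$-factors from \cref{thm:2-factors}, two orders $\prec_y,\prec_x$ making each cycle consecutive, orthogonal L-shaped edges) is a plausible direct attack, but it differs from the paper's proof, which instead \emph{splits} every vertex $u$ into $u_s$ (carrying the two incoming edges) and $u_t$ (carrying the two outgoing ones) joined by a split-edge; the split graph is $3$-regular and $3$-edge-colorable with the split-edges forming $M_2$, so \cref{thm:deg-3-col} applies verbatim, and re-merging $u_s$ with $u_t$ turns the position of $u_t$ into the single bend of each outgoing edge. That reduction is precisely what sidesteps the two points where your argument has genuine gaps.

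The first gap is the port assignment. In your scheme, at a vertex $v$ the outgoing $F_1$ edge takes a vertical stub towards its successor, which lies directly above $v$ in $\prec_y$, so it claims the $N$-port; but the incoming $F_2$ edge arrives at $v$ along a vertical segment in $v$'s column, and it claims the $N$-port as well whenever the $F_2$-predecessor of $v$ lies above $v$ in $\prec_y$. Symmetrically, the outgoing $F_2$ stub ($E$-port) collides with the horizontal segment of the incoming $F_1$ edge whenever the $F_1$-predecessor of $v$ lies to the right of $v$ in $\prec_x$. Since $\prec_x$ and $\prec_y$ are built from two unrelated cycle structures, nothing prevents these collisions, and they produce overlapping collinear segments, not right-angle crossings; your sentence ``the two $2$-factors use all four ports exactly once'' is exactly the statement that needs a proof and is false for the naive orientation rule. (Compare \cref{lem:oppo} in the proof of \cref{thm:degfourold}, which only works because there just two edges per vertex compete for the four orthogonal ports.) The second gap concerns closing edges: there is one per cycle of $F_1$ and one per cycle of $F_2$, i.e.\ possibly $\Theta(n)$ of them, whereas the boundary-translation trick from \cref{thm:deg-3-col} handles a single edge $e^*$. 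The mechanism that rescues the other closing edges in \cref{sec:degree-3} — choosing the walk's start so that the closing edge becomes a crossing-free \emph{diagonal} $M_2$ segment inside the parent component — does not transfer to your setting, because your closing edges belong to $F_1\cup F_2$ and must be routed orthogonally with a long vertical or horizontal segment spanning the whole cycle, again competing for occupied ports. Both obstructions dissolve under the paper's splitting construction, where the only diagonal edges are the split-edges and every original edge inherits its one bend from the merge.
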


\begin{proof}
By \cref{thm:2-factors}, we augment $G$ into a directed $4$-regular multigraph $G'$ with edge disjoint $2$-factors $F_1$ and $F_2$.
Let $G_s$ be the graph obtained from $G'$ as follows. For each vertex $u$ of $G'$ with incident edges $(a_1,u),(u,b_1) \in F_1$ and $(a_2,u),(u,b_2) \in F_2$, we add two vertices $u_s$ and $u_t$ to $G_s$ that are incident to the following five edges: $u_s$ is incident to the two incoming edges of $u$, namely, $(a_1,u_s)$ and $(a_2,u_s)$, while $u_t$ is incident to $(u_t,b_1)$ and $(u_t,b_2)$. Finally, we add the edge $(u_s,u_t)$ to $G_s$, which we call \emph{split-edge}. 
 
By construction, $G_s$ is $3$-regular and $3$-edge colorable, since each vertex of it is incident to one edge of $F_1$, one edge of $F_2$ and one split-edge. By applying the algorithm of \cref{thm:deg-3-col} to $G_s$, we obtain a RAC drawing $\Gamma_s$ of $G_s$, such that the matching $M_2$ in the algorithm is the one consisting of all the split-edges.
To obtain a $1$-bend drawing for $G'$, it remains to merge the vertices $u_s$ and $u_t$ for every vertex $u$ in $G'$. We place $u$ at the position of $u_s$ in $\Gamma_s$. We draw each outgoing edge $(u,x)$ of $u$ as a polyline with a bend placed close to the position of $u_t$ in $\Gamma_s$ (the specific position will be discussed later), which implies that the two segments are close to the edges $(u_s,u_t)$ and $(u_t,x)$ in $\Gamma_s$, respectively. This guarantees that any edge has exactly one bend, as any edge is outgoing for exactly one of its endvertices.

We next discuss how we place the bends for the outgoing edges of $u$. Since each split-edge belongs to $M_2$, it is drawn in $\Gamma_s$ either as the diagonal of a $1 \times 1$ grid box or as a closing edge. Also, since the outgoing edges of $u$ are in $M_1$ and $M_3$, they are drawn as horizontal and vertical line-segments.

Assume first that the split-edge of $u$ is the diagonal of a $1 \times 1$ grid box; see \cref{fig:port-placement-deg4-a}. If the outgoing edge $(u_t,x)$ belongs to $M_1$, then we place the bend of $(u,x)$ either half a unit to the right of $u_t$ if $x$ is to the right of $u_t$ in $\Gamma_s$, or half a unit to its left otherwise. Symmetrically, if  $(u_t,x)$ belongs to $M_3$, we place the bend half a unit either above or below $u_t$. 

Assume now that the split-edge of $u$ is a closing edge in exactly one of $H_y$ or $H_x$, say w.l.o.g.\ of a cycle $c$ in $H_x$, i.e., it spans the whole $x$-interval of $c$. 
By construction, the outgoing edge of $(u_t,x)$ that belongs to $M_3$ is a vertical line-segment attached above $u$, as either $u_t$ or $u_s$ are the first vertex of $c$ in $\prec_y$; in the latter case, $u_t$ is the second vertex of $c$ in $\prec_y$ by construction. 
If the edge $(u_t,x)$ belongs to $M_3$, we place the bend exactly at the computed position of $u_t$. If $(u_t,x)$ belongs to $M_1$, we place it either half a unit to the right of $u_t$ if $x$ is to the right of $u_t$ in $\Gamma_s$, or half a unit to its left otherwise; see \cref{fig:port-placement-deg4-b}.

Assume last that the split-edge of $u$ is the closing edge $e$ for a $H_y$ and a $H_x$ cycle, which is unique by \cref{cor:single-closing}. As discussed for the analogous case in \cref{sec:degree-3} (see the discussion following \cref{cor:single-closing}), one of $u_s$ and $u_t$, say w.l.o.g. $u_s$, is the leftmost, while the other $u_t$ is the bottommost vertex in $\Gamma_s$. For the placement of the bends, we slightly deviate from our approach above. Let $(u_t,x)$ and $(u_t,y)$ be the two edges of $M_1$ and $M_3$ incident to $u_t$ in $G_s$. Then, it is not difficult to find two grid points $b_x$ and $b_y$ sufficiently below the positions of $x$ and $y$ in $\Gamma_s$, such that $(u,x)$ and $(u,y)$ drawn by bending at $b_x$ and $b_y$ do not cross. Since no two bends overlap, no new crossings are introduced and the slopes of the segments involved in crossings are not modified, \mbox{the obtained drawing $\Gamma'$ is a $1$-bend RAC drawing for $G'$ (and thus~for~$G$).}

\begin{figure}[t]
    \centering
    \begin{subfigure}[b]{.48\textwidth}
    \centering
    \includegraphics[scale=0.9,page=1]{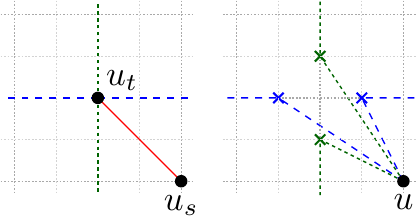}
    \subcaption{}
    \label{fig:port-placement-deg4-a}
    \end{subfigure}
    \hfil
    \begin{subfigure}[b]{.48\textwidth}
    \centering
    \includegraphics[scale=0.9,page=2]{images/bend-point-deg-4.pdf}
    \subcaption{}
    \label{fig:port-placement-deg4-b}
    \end{subfigure}
    \caption{Illustration on how to place the bends in the proof of \cref{thm:degree-4-one}. To merge the vertices $u_s$ and $u_t$ of a vertex $u$ in $G'$, $u$ is placed at the position of $u_s$. The bends of the outgoing edges at $u$ are placed close to the position of $u_t$ in the drawing depending on their orientation.}
    \label{fig:port-placement-deg4}
\end{figure}

Regarding the time complexity, we observe that we can apply 
\cref{thm:2-factors} and the split-operation in $\mathcal{O}(n)$ time. The split operation immediately yields a valid $3$-coloring of the edges, hence we can apply the algorithm of \cref{thm:deg-3-col} to obtain $\Gamma_s$ in $\mathcal{O}(n)$ time. Finally, contracting the edges can clearly be done in $\mathcal{O}(n)$ time, as it requires a constant number of operations per edge. For the area, we observe that in order to place the bends, we have to introduce new grid-points, but we at most double the number of points in any dimension, hence we still maintain the asymptotic quadratic area guaranteed by \cref{thm:deg-3-col}.
\end{proof}

The following theorem, whose proof is in the appendix, provides an alternative construction which additionally guarantees a linear number of edges drawn as straight-line segments.

\begin{theorem2rep}\apxmark\label{thm:degfourold}
Given a degree-$4$ graph $G$ with $n$ vertices and $m$ edges, it is possible to compute in $O(n)$ time a $1$-bend RAC drawing of $G$ with $O(n^2)$ area where at least $\frac{m}{8}$ edges are drawn as straight-line segments.
\end{theorem2rep}
\begin{proof}
The algorithm can be seen as an extension of the one for degree-$3$ graphs introduced in \cite{DBLP:journals/jgaa/AngeliniCDFBKS11}.
Let $F_1$ and $F_2$ be the two edge disjoint $2$-factors of $G$ obtained applying \cref{thm:2-factors}.
We will define a total order $\prec$ of the vertices based on $F_1$ that will guarantee some additional properties.
Once $\prec$ will be computed, the vertices will be placed on the diagonal of a $n \times n$ grid according to $\prec$, i.e., the vertex of position $i$ in $\prec$ will be placed on the coordinate $(i,i)$. Observe that in this way, the only edges of $F_1$ that cannot be drawn on the diagonal are exactly the closing edges of $F_1$.\\
Each edge of $F_2$ will be drawn as a polyline consisting of exactly two segments, a vertical segment followed by a horizontal one, where the orientation of any edge of $F_2$ is given by \cref{thm:2-factors} (recall that every vertex has exactly one incoming and one outgoing edge in $F_2$). The relative position of the endvertices of an edge in $\prec$ will decide whether the edge will be drawn above or below the diagonal. Namely, let $(u,v)$ be a directed edge of $F_2$. If $u \prec v$, we will draw $(u,v)$ below the diagonal. Otherwise, if $ v \prec u$, $(u,v)$ will be drawn above the diagonal. Observe that this scheme never uses the same port twice, as every vertex has indegree and outdegree one in $F_2$. Hence it only remains to show how to add the closing edges of $F_1$.
Remark that if $G$ is Hamiltonian, we can already stop our description at this point by setting $F_1$ to the Hamiltonian cycle, i.e., we can define the order $\prec$ to be consistent with an (arbitrary) Hamiltonian Path (and by observing that the single closing edge of $F_1$ can be added by using two slanted segments).
Let $\Gamma_{\overline{C}}$ be the induced drawing of $G \setminus C$, where $C$ is the set of closing edges.

\begin{lemma}\label{lem:oppo}
Any vertex $v$ in $\Gamma_{\overline{C}}$ has two free ports. Moreover, the free ports are opposite to the used ones.
\end{lemma}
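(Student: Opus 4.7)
The plan is to perform a direct port analysis at an arbitrary vertex $v$ in $\Gamma_{\overline{C}}$, distinguishing between the contribution of $F_1$- and $F_2$-edges incident to $v$, and then running through the four orientation cases for the two $F_2$-edges at $v$.

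First, I would observe that every non-closing edge of $F_1$ is drawn along the main diagonal, hence it enters its endpoints with slope exactly $1$. In particular, every such edge uses an \emph{oblique} port and therefore does not occupy any of the four orthogonal ports $N$, $E$, $S$, $W$ at $v$. Consequently, the orthogonal ports used at $v$ in $\Gamma_{\overline{C}}$ come entirely from the $F_2$-edges incident to $v$. By Theorem~\ref{thm:2-factors}, $v$ has indegree $1$ and outdegree $1$ in $F_2$, so exactly two $F_2$-edges are incident to $v$: one incoming $(a,v)$ and one outgoing $(v,b)$.

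Next I would use the drawing convention for $F_2$. Since each $F_2$-edge is a vertical segment followed by a horizontal segment, the tail-vertex always contributes a vertical port (either $N$ or $S$) and the head-vertex always contributes a horizontal port (either $E$ or $W$). Combining this with the below-/above-diagonal rule gives four cases at $v$:
\begin{itemize}
\item If $a \prec v$ and $v \prec b$, then $(a,v)$ is drawn below the diagonal and $(v,b)$ below the diagonal; the ports used at $v$ are $W$ (head of $(a,v)$) and $S$ (tail of $(v,b)$).
\item If $a \prec v$ and $b \prec v$, the used ports are $W$ and $N$.
\item If $v \prec a$ and $v \prec b$, the used ports are $E$ and $S$.
\item If $v \prec a$ and $b \prec v$, the used ports are $E$ and $N$.
\end{itemize}
In every case the two used ports consist of exactly one horizontal port and one vertical port, so the remaining two ports (the other horizontal port and the other vertical port) are free. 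Moreover, the other horizontal port is the opposite of the used horizontal one, and the other vertical port is the opposite of the used vertical one, establishing the second claim.

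The only subtlety — and the step I would be most careful about — is to confirm that closing edges of $F_1$, which are removed when passing to $\Gamma_{\overline{C}}$, do not create a vertex whose $F_2$-edges occupy different ports than claimed; this is automatic, because removing edges can only \emph{free} ports, and the $F_1$-edges that remain continue to lie along the diagonal and use oblique ports only. Hence the port inventory at $v$ is exactly as described, completing the proof.
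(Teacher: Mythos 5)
Your proof is correct and follows essentially the same route as the paper's: the non-closing $F_1$-edges lie on the diagonal and use only oblique ports, so the orthogonal ports at $v$ come solely from the one incoming and one outgoing $F_2$-edge, which together occupy one port from $\{N,S\}$ and one from $\{W,E\}$, leaving the two opposite ports free. One caveat: your head/tail port assignment is reversed --- to be consistent with the below-/above-diagonal rule, the tail of an $F_2$-edge must use a horizontal ($E$ or $W$) port and the head a vertical ($N$ or $S$) port, which is the convention the paper's own proof relies on (``a used $W$-port implies an outgoing edge'') --- so the specific ports named in your four cases are incorrect, but since the lemma only needs that the two used ports come one from each opposite pair, the conclusion is unaffected.
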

\begin{appproof}
Suppose first that $v$ uses the $W$- and $E$-port. A used $W$-port implies the existence of an edge $(v,u_i)$ (with $v \prec u_i$), while a used $E$-port implies the existence of an edge $(v,u_j)$ (with $u_j \prec v$), but then $v$ has two outgoing edges in $F_2$, a contradiction.  Equivalently, a used $N$-port and a used $S$-port imply incoming edges, hence $v$ has indegree $2$, a contradiction.
\end{appproof}
We now define $\prec$ such that it satisfies our required properties.
\begin{lemma} \label{lem:free-port}
There exists a total order $\prec$ on the vertices such that one of the two extremal vertices with respect to $\prec$ of any cycle of $F_1$ has a good free port.
\end{lemma}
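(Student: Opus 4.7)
The plan is to construct $\prec$ by choosing, for each cycle $C$ of $F_1$, which edge of $C$ to designate as the closing edge and in which direction to traverse $C$; this yields $2|C|$ options per cycle, and the ordering among different cycles is irrelevant for the condition of the lemma. So it suffices to show that for every cycle $C$, at least one of these $2|C|$ choices produces an extremal vertex (first or last of $C$ in $\prec$) with a good free port.

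My first step is to classify each vertex $v$ of $C$ by its free-port pair. By Lemma~\ref{lem:oppo}, each vertex has exactly one free port in $\{N,S\}$ and exactly one in $\{W,E\}$, with the specific pair determined by the orientations of its two incident $F_2$-edges relative to $\prec$. I would then identify which labels make $v$ suitable as an extremal vertex: for the closing edge of $C$ drawn as a single-bend L-shape, a \emph{good free port} at $v_1$ is, say, a free $N$-port (for an above-diagonal L-shape) or a free $E$-port (for a below-diagonal one), with symmetric conditions at $v_k$. The main step is then a case analysis or pigeonhole argument over the $|C|$ candidate closing edges together with the two orientation choices: I would show that the free-port labels cannot be arranged around $C$ so as to make \emph{all} such choices simultaneously bad, using the fact that consecutive vertices of $C$ are linked by an $F_1$-edge which lies on the diagonal and uses none of $\{N,S,W,E\}$, so the free-port labels of adjacent vertices in $C$ are governed solely by their $F_2$-edges.

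The main obstacle is the apparent circularity: choosing $\prec$ determines the port labels, yet we want to pick $\prec$ based on those labels. I plan to break this by first fixing an arbitrary linear ordering of the cycles of $F_1$; once this is done, the $\prec$-position of each vertex relative to any $F_2$-neighbor outside its own cycle is determined, so the free-port labels of the vertices of $C$ are already fixed up to the contribution of $F_2$-edges with both endpoints in $C$. These intra-cycle $F_2$-edges can then be analyzed locally within $C$, and the argument in the preceding paragraph applies to them without interfering with the choices made for other cycles.
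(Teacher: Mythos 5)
Your setup is sound and matches the paper's framework: you correctly reduce to an independent choice of rotation (and closing edge) per cycle of $F_1$, you invoke Lemma~\ref{lem:oppo} to get one free port in $\{N,S\}$ and one in $\{W,E\}$ per vertex, you identify the right notion of ``good'' port for the first and last vertex, and you correctly break the circularity by fixing the inter-cycle order first. However, the central step of your argument is asserted rather than proved. The sentence ``I would show that the free-port labels cannot be arranged around $C$ so as to make all such choices simultaneously bad'' is precisely the statement of the lemma restated for a single cycle; a ``case analysis or pigeonhole argument'' is named but never carried out, and it is not clear that a pure counting argument over the $2|C|$ rotations would succeed, because the port labels of the vertices are themselves functions of the rotation (for chords) and you give no mechanism that forces a good configuration to appear.

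The ingredient you are missing is the one the paper actually uses: connectivity of $G$. Since $G$ is connected, not every vertex of a cycle $c$ can have all of its $F_2$-edges as chords, so some vertex $v$ of $c$ has an \emph{external} edge. Crucially, whether that external edge is a forward or a backward edge is determined solely by the (already fixed) order of the cycles and is invariant under any cyclic rotation of $c$; hence the orthogonal port it occupies at $v$ is known in advance. If it is a backward edge, rotating $c$ so that $v$ is the first vertex leaves the $N$- or $E$-port of $v$ free; if it is a forward edge, rotating so that $v$ is last leaves the $S$- or $W$-port free. This single observation replaces your unproved pigeonhole claim and is where connectivity enters --- your proposal never uses connectivity, which is why it cannot close the argument.
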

\begin{appproof}
Suppose for a contradiction that no such ordering exists. Clearly, a cyclic rotation w.r.t. $\prec$ of the vertices inside the same cycle maintains the properties of our ordering, hence it is sufficient to describe this rotation for the cycles independently. Let $c$ be a cycle that does not satisfy this property. First observe that not all vertices of $c$ can be adjacent to only chords, as otherwise $G$ would be disconnected. Hence, let $v$ be a vertex that has an external edge. If this external edge is a backward edge, then rotating $c$ such that $v$ is the first vertex of $c$ in $\prec$ implies that either the $N$-port or the $E$-port of $v$ is free.
Similarly, if it is a forward edge, rotating $c$ such that $v$ is the last vertex of $c$ in $\prec$ implies a free $S$-port or a free $W$-port.
\end{appproof}

Now we describe how to add  the closing edges $C$ to $\Gamma_{\overline{C}}$.
By \cref{lem:free-port}, we have at least one good port free for any closing edge. Let $c$ be a cycle of $F_1$. W.l.o.g. assume that the last vertex of $c$ in $\prec$ has a free $S$-port. The other cases follow symmetrically.
Let $v_1,v_2,\dots,v_n$ be the ordering of the vertices of $c$ induced by $\prec$. Assume that  $(i,i)$ is the current position of $v_i$ (clearly, the actual position is offset by the number of vertices that are before $v_1$ in $\prec$, but this is omitted for clarity reasons).

\begin{enumerate}

\item $v_1$ has a free $E$-port \\
Then we can simply add the closing edge to $\Gamma_{\overline{C}}$.
Hence, we now assume that $v_1$ has no free $E$-port. By \cref{lem:oppo} it follows that the $W$-port of $v_1$ is free.
\item The $E$-port and the $S$-port of $v$ are taken. Observe that the edge using the $S$-port of $v_1$ is a backward edge.
 If the edge that uses the $E$-port of $v_1$ is external (forward) \\
In this case, we move $v_1$ to $(n,2-\epsilon)$. This allows us to add the edge $(v_1,v_n)$ using the $N$-port of $v_1$, which is free by assumption. Note that, as the edge using the $E$-port of of $v_1$ is a future edge, we can keep the $E$-port for it. Similarly, this holds for the edge using the $S$-port, which is a backward edge by definition. Now, the edge $(v_1,v_2)$ will be drawn as a polyline with a horizontal segment at $v_1$ which uses the $E$-port and with a bend that is sufficiently close to $v_2$ such that we use a non orthogonal port of $v_2$; see \cref{fig:move-diag-1}.
Otherwise, the edge that uses the $E$-port of $v_1$ is a chord. Let $v_i \in c$ be the other endpoint of the chord.
We place $v_1$ at $(i,2-\epsilon)$. Edge $(v_1,v_n)$ uses the $E$-port of $v_1$, we draw $(v_1,v_i)$ such that it uses the $N$-port of $v_1$ and the $S$-port of $v_i$ (which is free as it was used by the same edge previously) and redraw the edge $(v_1,v_2)$ as described in the previous case as shown in \cref{fig:move-diag-2}.

\item The $E$-port and the $N$-port of $v_1$ are not free
Suppose first that both edges using the $E$-port and $N$-port are external (forward),
We move $v_1$ to $(n+\epsilon,2-\epsilon)$. This way, we can use a non orthogonal port for  at $v_1$. Observe that by assumption the $W$-port of $v_1$ is also free, hence $(v_1,v_n)$ could be drawn such that it uses the $W$-port of $v_1$. Both forward edges of $v_1$ will retain the same port; see \cref{fig:move-diag-3}.

Suppose now that the edge using the $E$-port is external, while the edge using the $N$-port is internal  \\
Let $v_i$ be the endpoint of the chord that is incident to  $v_1$. By Lemma \ref{lem:oppo}, we know that the $E$-port of $v_i$ is free. Hence, we can place $v_1$ at  $(n-\epsilon,2-\epsilon)$, which allows to add $(v_1,v_i)$ using the $N$-port of $v_1$ and the $E$-port of $v_i$. Further, placing $v_1$ at the $x$-position $(n- \epsilon)$ allows us to add $(v_1,v_n)$ using a non orthogonal port at $v_1$, while the forward edge of $v_1$ will keep using the $E$-port of $v_1$; see \cref{fig:move-diag-3}. Suppose last that the edge using the $E$-port is internal, while the edge using the $N$-port is external. Let $v_i$ be the endpoint of the chord that is incident to $v_1$.
By Lemma \ref{lem:oppo}, either the $W$-port or the $E$-port of $v_i$ is free, w.l.o.g. assume that the $E$-port of $v_i$ is free as the other case is symmetric. Then we place $v_1$ to $(i-\epsilon,2-\epsilon)$. Further, we redraw the edge $(v_{i-1},v_i)$ such that it is not on the diagonal, but rather it has a horizontal segment starting at $v_i$ and then a slanted segment to $v_{i-1}$. This allows the future edge of $v_1$ to cross the spine; see \cref{fig:move-diag-5,fig:move-diag-6}.
\item Both edges of $F_2$ that are incident to $v_1$ are chords. \\
Let $v_t$ be the endpoint of the edge that is using the $N$-port of $v_1$ and $v_r$ be the endpoint of the edge that is using the $E$-port of $v_1$. Observe that the $S$-port of $v_r$ is free (as it was occupied by the edge $(v_1,v_r)$ before) as well as the $E$-port of $v_t$ by \cref{lem:oppo}. 
First assume that $t < r$.
We place $v_1$ at $(r -\epsilon,2-\epsilon)$  which allows to add $(v_1,v_r)$ using a non orthogonal port of $v_1$ and then we use the $N$-port of $v_1$ to connect it to $v_t$ as shown in \cref{fig:move-diag-7} \\
Now assume that $r < t$. By Lemma \ref{lem:oppo}, either the $W$-port or the $E$-port of $v_r$ is free, w.l.o.g. assume that the $E$-port is free.
Then, we place $v_1$ at $(r +\epsilon,2-\epsilon)$. Again, we redraw the edge $(v_r,v_{r+1})$ such that the edge $(v_1,v_t)$ can cross the spine; see \cref{fig:move-diag-8,fig:move-diag-9}. The other edges are analogous to the previous cases.
\end{enumerate}

\begin{figure}[p]
    \centering
    \begin{subfigure}[b]{.31\textwidth}
    \centering
    \includegraphics[scale=1,page=1]{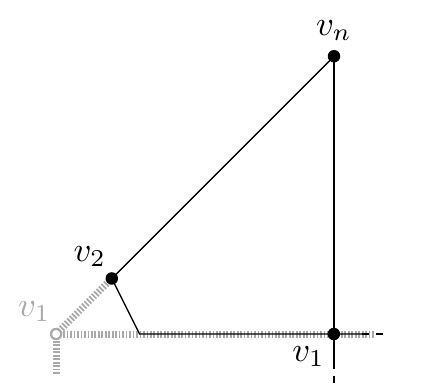}
    \subcaption{}
    \label{fig:move-diag-1}
    \end{subfigure}
    \hfil
    \begin{subfigure}[b]{.31\textwidth}
    \centering
    \includegraphics[scale=1,page=2]{images/move-diag-new.pdf}
    \subcaption{}
    \label{fig:move-diag-2}
    \end{subfigure}
    \begin{subfigure}[b]{.31\textwidth}
    \centering
    \includegraphics[scale=1,page=3]{images/move-diag-new.pdf}
    \subcaption{}
    \label{fig:move-diag-3}
    \end{subfigure}\\
    \bigskip
    \begin{subfigure}[b]{.31\textwidth}
    \centering
    \includegraphics[scale=1,page=4]{images/move-diag-new.pdf}
    \subcaption{}
    \label{fig:move-diag-4}
    \end{subfigure}
    \hfil
    \begin{subfigure}[b]{.31\textwidth}
    \centering
    \includegraphics[scale=1,page=5]{images/move-diag-new.pdf}
    \subcaption{}
    \label{fig:move-diag-5}
    \end{subfigure}
    \begin{subfigure}[b]{.31\textwidth}
    \centering
    \includegraphics[scale=1,page=6]{images/move-diag-new.pdf}
    \subcaption{}
    \label{fig:move-diag-6}
    \end{subfigure}\\
    \bigskip
    \begin{subfigure}[b]{.31\textwidth}
    \centering
    \includegraphics[scale=1,page=7]{images/move-diag-new.pdf}
    \subcaption{}
    \label{fig:move-diag-7}
    \end{subfigure}
    \hfil
    \begin{subfigure}[b]{.31\textwidth}
    \centering
    \includegraphics[scale=1,page=8]{images/move-diag-new.pdf}
    \subcaption{}
    \label{fig:move-diag-8}
    \end{subfigure}
    \begin{subfigure}[b]{.31\textwidth}
    \centering
    \includegraphics[scale=1,page=9]{images/move-diag-new.pdf}
    \subcaption{}
    \label{fig:move-diag-9}
    \end{subfigure}
    \caption{Illustration of the cases in the proof of \cref{thm:degfourold} to add the closing edges in $C$ to $\Gamma_{\overline{C}}$. The grey dotted lines show the initial drawing of edges which are redrawn in order to add $C$.}
    \label{fig:move-diag}
\end{figure}

In order to bound the number of edges that require a bend, let $c = (v_1,v_2,\dots,v_n)$ be a cycle of $F_1$. By construction, all edges of $F_1 $ are drawn straight line with the exception of possibly (i) the first edge of $c$, (ii) the last (i.e., the closing edge) of $c$ and (iii) one intermediate edge.

Among the two $2$-factors, we choose $F_1$ as the one that contains more real edges. Let $c$ be a cycle of $F_1$. Denote by $r(c)$ the number of real edges of $c$ and $s(c)$ the number of edges of $c$ that are drawn without a bend.\\
Clearly, if $c$ contains a fake edge, we can rotate $c$ such that the closing edge corresponds to the fake edge, in which case $s(c) = r(c)$. 
Now let $c$ be a cycle that contains no fake edge. Clearly, we have $|c| \geq 3$, as we assume the initial graph to be simple. Assume that  $|c| = 3$. Clearly, there can be no chords in $c$, hence all edges of $F_2$ incident to the vertices of $c$ are external. Now, as any pair of vertices in $c$ is adjacent, if one vertex $v_1$ is incident to two forward edges (backward edges), while the other vertex $v_2$ is incident to at most one forward (backward) edge, we can simply define the closing edge to be $(v_1,v_2)$ and set $v_1$ as the smallest (highest) and $v_2$ as the highest (smallest), which allows us to add the closing edge using a bend but without moving a vertex.
Hence, assume that all vertices have two forward edges. But then in particular, the $S$-port of the middle vertex is free, hence we can draw the closing edge of $c$ using a slanted port at the lowest vertex without introducing a crossing. Hence, we conclude that in both cases, $s(c) = r(c)-1$. 
For $|c| \geq 4$, at most $|c|-3$ edges will get a bend by the previous observation.

By the choice of $F_1$, we have that $r(F_1) \geq \frac{m}{2}$. Clearly, $r(F_1) = \sum_i r(c_i)$, where the union of the cycles $c_i$ is $F_1$. If $c$ is a cycle with a fake edge, we have that $s(c) = r(c)$, i.e., all real edges remain on the diagonal. Otherwise, we have that $s(c) \geq \frac{r(c)}{4}$, hence it follows that
$$s(F_1) \geq \frac{r(F_1)}{4} \geq \frac{m}{8}$$
edges can be drawn without a bend.

We conclude the proof of this theorem by discussing the time complexity and the required area. Applying \cref{thm:2-factors} to obtain the $2$-factors can be done in $\mathcal{O}(n)$ time, which gives us an initial total ordering of the vertices $\prec_{init}$. Defining a feasible total order of the vertices for every cycle $\prec$ given $\prec_{init}$ can be done in $\mathcal{O}(n)$ time. The potential displacement of the first vertex of a cycle in $\prec_y$ requires a constant number of operations, hence in total $\mathcal{O}(n)$ time.
For the area, we observe that the initial positioning on the diagonal takes $n \times n$ area (including the bends). By setting $\epsilon$ to $\frac{1}{3}$, it is sufficient to scale the final drawing by a factor of $3$ in order to guarantee grid coordinates for vertices and bends.
\end{proof}

%\end{prooff}
%\end{restatable}

\section{RAC drawings of $\mathbf{7}$-edge-colorable degree-$\mathbf{7}$ graphs}
We prove that $7$-edge-colorable degree-$7$ graphs admit $2$-bend RAC drawings by proving the following slightly stronger statement.

\begin{theorem}\label{thm:7colorable}
Given a degree-$7$ graph $G$ decomposed into a degree-$6$ graph $H$ and a matching $M$, it is possible to compute in $O(n)$ time a $2$-bend RAC drawing of $G$ with $O(n^2)$ area. 
\end{theorem}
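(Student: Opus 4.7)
The plan is to lift the splitting reduction of \cref{thm:degree-4-one} one level further: starting from $G$, construct an auxiliary degree-$4$ graph $G_s$, draw $G_s$ with one bend per edge via \cref{thm:degree-4-one}, and merge the copies of each original vertex so that every original edge ends up with at most two bends.

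First, I would apply \cref{thm:2-factors} with $d=3$ to the degree-$6$ subgraph $H$, obtaining an orientation of~$H$ with in-degree and out-degree~$3$ at every vertex and a partition of $E(H)$ into three directed $2$-factors $F_1, F_2, F_3$; the matching~$M$ stays undirected. Next, I would build $G_s$ by replacing every vertex $u$ of $G$ with three copies $u_{\mathrm{in}}, u_{\mathrm{m}}, u_{\mathrm{out}}$ arranged on a path and linked by two \emph{split edges} $\{u_{\mathrm{in}}, u_{\mathrm{m}}\}$ and $\{u_{\mathrm{m}}, u_{\mathrm{out}}\}$. The three incoming factor edges at~$u$ are attached to $u_{\mathrm{in}}$, the three outgoing factor edges to $u_{\mathrm{out}}$, and the matching edge of $u$ (if any) to $u_{\mathrm{m}}$. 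Then $u_{\mathrm{in}}$ and $u_{\mathrm{out}}$ each have degree~$4$ (three factor edges plus one split edge), while $u_{\mathrm{m}}$ has degree at most~$3$, so \cref{thm:degree-4-one} applies to $G_s$ and yields in $O(n)$ time a $1$-bend RAC drawing $\Gamma_s$ of $G_s$ on an $O(n)\times O(n)$ grid.

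The drawing~$\Gamma$ of~$G$ is then obtained by merging, for each vertex $u$, the three copies into a single vertex placed at the position of $u_{\mathrm{m}}$ in~$\Gamma_s$. The matching edge at $u_{\mathrm{m}}$ is kept as drawn, contributing its single bend. Each edge of~$G$ incident to~$u_{\mathrm{in}}$ or $u_{\mathrm{out}}$ is rerouted in the spirit of the merging step of \cref{thm:degree-4-one}: the new first segment leaves~$u$ along the direction used by the corresponding split edge at $u_{\mathrm{m}}$ in~$\Gamma_s$, bends arbitrarily close to the former position of $u_{\mathrm{in}}$ or $u_{\mathrm{out}}$, and from that bend continues exactly as in~$\Gamma_s$.

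The main technical obstacle is twofold. First, the merge must fit within the two-bend budget: a factor edge $a\to b$ in some $F_i$ naively accumulates three bends (one from $\Gamma_s$ plus one per rerouted endpoint). The fix is to argue, through a port-based case analysis mirroring the one in the proof of \cref{thm:degree-4-one}, that on at least one of the two sides the slope of the split edge at $u_{\mathrm{m}}$ can be chosen to coincide with the slope used by the factor edge at the corresponding copy, so that the rerouting absorbs the bend already present in~$\Gamma_s$ rather than adding to it. Second, the RAC property has to be preserved: this follows from the same port-based analysis together with the observation that the new segment emanating from~$u$ can be kept inside an arbitrarily narrow tube around the split edge, so that every crossing it creates occurs with an orthogonal segment and therefore at a right angle. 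Combined with the $O(n)$-time guarantees of \cref{thm:2-factors} and \cref{thm:degree-4-one} and the constant per-vertex blow-up of the merge, this produces a $2$-bend RAC drawing of~$G$ in $O(n)$ time on an $O(n^2)$-area grid.
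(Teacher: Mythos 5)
Your reduction does not go through as stated, and the two places where you flag ``technical obstacles'' are exactly where it breaks. First, the bend budget. In \cref{thm:degree-4-one} the merge costs only one bend per edge because the drawing of the auxiliary graph is \emph{straight-line} (it comes from \cref{thm:deg-3-col}) and each edge is rerouted at only \emph{one} endpoint: the merged vertex is placed at the position of $u_s$, so the incoming end needs no rerouting. In your construction a factor edge $(a_{\mathrm{out}},b_{\mathrm{in}})$ already carries one bend in $\Gamma_s$ and must be rerouted at \emph{both} endpoints, since the merged vertices sit at the positions of $a_{\mathrm{m}}$ and $b_{\mathrm{m}}$; that is three bends. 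Your proposed fix --- choosing the slope of a split edge at $u_{\mathrm{m}}$ to coincide with the slope of the factor edge at the corresponding copy --- presupposes a freedom that \cref{thm:degree-4-one}, used as a black box, does not give you: the slopes in its output are dictated by its internal decomposition (horizontal $M_1$, vertical $M_3$, crossing-free split edges), and your split edges $\{u_{\mathrm{m}},u_{\mathrm{in}}\}$ and $\{u_{\mathrm{m}},u_{\mathrm{out}}\}$ are ordinary edges of the degree-$4$ graph there, so they land in one of its $2$-factors and are themselves drawn with a bend and possibly with crossings.

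Second, and more fatally, the RAC property. The merge in \cref{thm:degree-4-one} is sound precisely because the split edges lie in the crossing-free matching $M_2$, so the new first segments create \emph{no} crossings at all and ``the slopes of the segments involved in crossings are not modified.'' Your split edges are not crossing-free in $\Gamma_s$, so a rerouted segment kept ``inside an arbitrarily narrow tube'' around one of them will cross whatever that split edge crossed, but at an angle only \emph{close} to $90^\circ$: an oblique segment crossing an axis-parallel one is not a right-angle crossing, and RAC admits no approximation. You also route three incoming factor edges through the same tube at $u_{\mathrm{in}}$, forcing overlapping or nearly parallel oblique segments that must then be separated, perturbing slopes further. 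The paper avoids all of this by not iterating the reduction: it draws $G$ directly, placing each vertex in an $8\times 8$ box, building orders $\prec_x$ and $\prec_y$ from the three $2$-factors of $H$, classifying edges into type-$1$ and type-$2$ via a labelling lemma, and assigning orthogonal ports so that every crossing occurs between a horizontal and a vertical segment. To salvage your approach you would need to re-engineer the inner construction so that your split edges become crossing-free and so that the $\Gamma_s$-bend of each factor edge can be absorbed at one end --- at which point you are no longer using \cref{thm:degree-4-one} as a black box.
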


Since $H$ is a degree-$6$ graph, it admits a decomposition into three disjoint (directed) $2$-factors $F_1$, $F_2$ and $F_3$  after applying \cref{thm:2-factors} and (possibly) augmenting $H$ to a $6$-regular (multi)-graph. To distinguish between directed and undirected edges, we write $\{u,v\}$ to denote an undirected edge between $u$ and $v$, while $(u,v)$ denotes a directed edge from $u$ to $v$.
In the following, we will define two total orders $\prec_x$ and $\prec_y$, which will define the $x$- and $y$-coordinates of the vertices of $G$, respectively. We define $\prec_y$ such that the vertices of each cycle in $F_1$ will be consecutive in $\prec_y$. Initially, for any cycle of $F_1$, the specific internal order of its vertices in $\prec_y$ is specified by one of the two traversals of it; however, we note here that this choice may be refined later in order to guarantee an additional property (described in \cref{lem:no-3-0}).
The definition of $\prec_x$ is more involved and will also be discussed later. \cref{thm:2-factors} guarantees that the edges of $F_2$ ($F_3$) are oriented such that any vertex has at most one incoming and one outgoing edge in $F_2$ ($F_3$). 
Once $\prec_y$ and $\prec_x$ are computed, each vertex $u$ of $G$ will be mapped to point $(8i, 8j)$ of the Euclidean plane provided that $u$ is the $i$-th vertex in $\prec_x$ and the $j$-th vertex in $\prec_y$.
Each vertex $u$ is associated with a closed box $B(u)$ centered at $u$ of size $8 \times 8$. 
We aim at computing a drawing of $G$ in which 
\begin{enumerate*}[label={(\roman*)}, ref=(\roman*)]
\item\label{prp:no-box-overlap}no two boxes overlap, and 
\item\label{prp:in-the-box}the edges are drawn with two bends each so that only the edge-segments that are incident to $u$ are contained in the interior of $B(u)$, while all the other edge-segments are either vertical or horizontal. 
\end{enumerate*}
\mbox{This guarantees that the resulting drawing is $2$-bend RAC; see \cref{fig:k8}.}

\begin{figure}[t]
    \centering
    \begin{subfigure}[b]{.48\textwidth}
    \centering
    \includegraphics[scale=0.5,page=1]{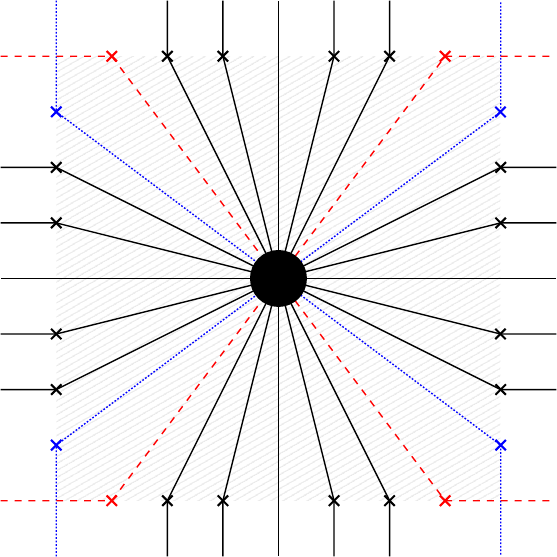}
    \subcaption{}
    \label{fig:the-box}
    \end{subfigure}
    \hfil
    \begin{subfigure}[b]{.48\textwidth}
    \centering
    \includegraphics[scale=0.5,page=2]{images/the-box-example.pdf}
    \subcaption{}
    \label{fig:k8}
    \end{subfigure}
    \caption{Edge routing in the $8 \times 8$ box $B(u)$ of a vertex $u$ (a). Red dashed (blue dotted) ports are reserved for horizontal (vertical) type-$2$ edges. A $2$-bend RAC drawing of $K_8$ is shown in (b). In this drawing, red dashed edges are horizontal type-$2$, while the blue dotted one is vertical type-$2$.}
\end{figure}
In the final drawing, all edges will be drawn with exactly three segments, out of which either one or two are \emph{oblique}, i.e., they are neither horizontal nor vertical.  It follows from \ref{prp:in-the-box} that the bend point between an oblique segment and a vertical (horizontal) segment lies on a horizontal (vertical) side of the box containing the oblique segment.
During the algorithm, we will classify the edges as \emph{type-$1$} or \emph{type-$2$}. Type-$1$ edges will be drawn with one oblique segment, while type-$2$ edges with two oblique segments. In particular, for a type-$1$ edge $(u,v)$, we further have that the oblique segment is incident to $v$, which implies that $(u,v)$ occupies an orthogonal port at $u$. On the other hand, a type-$2$ edge $(u,v)$ requires that $B(u)$ and $B(v)$ are \emph{aligned} in $y$ (in $x$), i.e., there exists a horizontal (vertical) line that is partially contained in both $B(u)$ and $B(v)$, in order to draw the middle segment of $(u,v)$ horizontally (vertically). By construction, this is equivalent to having $u$ and $v$ consecutive in $\prec_y$ ($\prec_x$). These alignments guarantee that if we partition the edges of $F_1$ into $\bar{F_1}$ and $\hat{F_1}$ containing the closing and non-closing ones, respectively, then it is possible to draw $\hat{F_1}$ as a horizontal type-$2$ edge (independent of the $x$-coordinate of its endvertices), as its endvertices are consecutive in $\prec_y$ by construction. 
Thus, we can put our focus on edges in $\bar{F_1} \cup F_2 \cup F_3$ , which we initially classify as type-$1$ edges (by orienting each edge $(u,v)$ of $\bar{F_1}$ from $u$ to $v$ if $u \prec_y v$).
We refine $\prec_y$ using the concept of \emph{critical vertices}. Namely, for a vertex $u$ of $G$, the direct successors of $u$ in $\bar{F_1} \cup F_2 \cup F_3$ are the critical vertices of $u$, which are denoted by $c(u)$.
Based on the relative position of $u$ to its critical vertices in $\prec_y$, we label $u$ as $(\alpha,\beta)$, if $\alpha$ vertices $t_1,\ldots,t_\alpha$ of $c(u)$ are after $u$ in $\prec_y$ and $\beta$ vertices $b_1,\ldots,b_\beta$ before. We refer to $t_1,\ldots,t_\alpha$ ($b_1,\ldots,b_\beta$) as the \emph{upper} (\emph{lower}) \emph{critical neighbors} of $u$. An edge connecting $u$ to an upper (lower) critical neighbor is called \emph{upper critical} (\emph{lower critical}, respectively). More general, the upper and lower critical edges of $u$ are its \emph{critical edges}. 

Note that $2 \leq \alpha+\beta \leq 3$ as any vertex has exactly one outgoing edge in $F_2$ and $F_3$ and at most one in $\bar{F_1}$, that is, the number of upper and lower critical neighbors of vertex $u$ ranges between $2$ and $3$. It follows that the label of each vertex of $H$ is in $\{(0,2), (1,1), (2,0), (1,2), (2,1), (3,0)\}$; refer to these labels as the \emph{feasible labels} of the vertex. Observe that a $(3,0)$-, $(2,1)$- or $(1,2)$-label implies that the vertex is incident to a closing edge of $F_1$ (hence, each cycle in $F_1$ has at most one such vertex, which is its first one in $\prec_y$). This step will complete the definition of $\prec_y$.

\begin{lemma2rep}\apxmark\label{lem:no-3-0}
For each cycle $c$ of $F_1$, there is an internal ordering of its vertices followed by a possible reorientation of one edge in $F_2 \cup F_3$, such that in the resulting $\prec_y$ 
\begin{enumerate*}[label={(\alph*)}, ref=(\alph*)]
\item\label{prp:feasible}every vertex of $c$ has a feasible label,
\item\label{prp:no-3-0}no vertex of $c$ has label $(3,0)$, and 
\item\label{prp:1-2-in}if there exists a $(1,2)$-labeled vertex in $c$, then its (only) upper critical neighbor belongs to $c$.
\end{enumerate*}
\end{lemma2rep}

\begin{proof}
Let $c = (v_1,\dots,v_k)$ be a cycle of $F_1$ with the initial internal ordering $v_1 \prec_y \ldots \prec_y v_k$. First observe that the labels of any vertex is dependent on $\prec_y$ and therefore a change to the internal ordering of $c$ can potentially change the labels of the vertices of $c$. Also, the closing edge is dependent on this internal ordering - we will always assume that the current closing edge $(u,v)$ of $c$ is directed from $u$ to $v$ if $u \prec_y$ v.
An internal ordering of $c$ is completely defined by specifying the first and the last vertex of $c$ in $\prec_y$ (which are connected in $F_1$), i.e., there is an $i$ such that $v_i$ and $v_{i-1}$ are the first and the last vertex of $c$ (or vice-versa).
We first consider a case that is simple to be addressed. In particular, if there exists a vertex $v_i$ of $c$ that is incident to multiple copies of the same edge and at least one of those copies is a critical edge for $v_i$, then by removing this copy we can ensure that $v_i$ has at most one critical edge besides the potential closing edge of $c$. We place $v_i$ as the first vertex of $c$ in $\prec_y$ and $v_{i-1}$ as the last.  Since $v_i$ is now incident to the closing edge of $c$, it has at most two upper critical neighbors, guaranteeing that $v_i$ does not have the label $(3,0)$ or $(1,2)$, as $v_i$ has at most two critical neighbors and no other vertex in $c$ has the label $(1,2)$, as they do not have an outgoing edge in $\bar{F_1}$ and since we did not redirect any edge in $F_2$ or $F_3$, it follows that they have at most one outgoing edge in $F_2$ and one in $F_3$, hence Properties~\ref{prp:no-3-0} and \ref{prp:1-2-in} hold. Also, since we did not reorient any edge in $c$, Property~\ref{prp:feasible} is maintained. Otherwise, we proceed by distinguishing two main cases. Assume first that there is a vertex $v_i$ of $c$ having one or two backward edges to a cycle $c'$ such that the vertices of $c'$ precede the vertices of $c$ in $\prec_y$ (note that these backward edges are by definition in $F_2 \cup F_3$). We place $v_i$ as the first vertex of $c$ in $\prec_y$ and $v_{i-1}$ as the last. Since $v_i$ is incident to a backward edge, it has at least one lower critical neighbor, hence $v_i$ does not have the label $(3,0)$ which guarantees Property~\ref{prp:no-3-0}). The only case in which $v_i$ can have label $(1,2)$ is if it has two lower critical neighbors. But in this case, its upper critical neighbor corresponds to  $v_{i-1}$ and is in the same cycle, hence Property~\ref{prp:1-2-in} is maintained. Also, since no edge was reoriented, it follows that the label of each vertex of $c$ remains feasible as required by Property~\ref{prp:feasible}. 

\begin{figure}[t]
    \centering
    \begin{subfigure}[b]{.31\textwidth}
    \centering
    \includegraphics[scale=1,page=1]{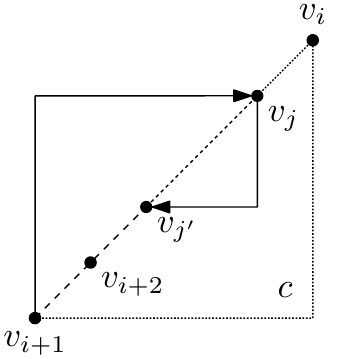}
    \subcaption{}
    \label{fig:cycle-flip-1}
    \end{subfigure}
    \hfil
    \begin{subfigure}[b]{.31\textwidth}
    \centering
    \includegraphics[scale=1,page=2]{images/cycle-flip.pdf}
    \subcaption{}
    \label{fig:cycle-flip-2}
    \end{subfigure}
    \begin{subfigure}[b]{.31\textwidth}
    \centering
    \includegraphics[scale=1,page=3]{images/cycle-flip.pdf}
    \subcaption{}
    \label{fig:cycle-flip-3}
    \end{subfigure}
    \caption{Cycle $c$ with (a) $v_{i+1} \prec_y v_{j'} \prec_y v_j$ and (b) a (possibly reordered) cycle $c$ with $v_{i+1} \prec_y v_j \prec_y v_{j'}$. Subfigure (c) shows $c$ with the redirected edge $(v_j, v_{i+1})$. }
\end{figure}

Suppose now that no vertex of $c$ has a backward edge, i.e., for each vertex in $c$, its outgoing edges are either chords in $c$ or forward edges. Among the vertices of $c$, let $v_i$ be the one that is incident to the most forward edges. We place $v_i$ as the last vertex of $c$ in $\prec_y$ and $v_{i+1}$ as the first. If the number of forward edges at $v_i$ is two, then it has label $(2,0)$.
In this case, we reorient the closing edge $(v_{i+1},v_i)$ of $c$ from $v_i$ to $v_{i+1}$, which implies that $v_i$ now has label $(2,1)$ and $v_{i-1}$ has label $(2,0)$; since any other vertex has at most two critical neighbors, it follows that Properties~\ref{prp:feasible},~\ref{prp:no-3-0} and~\ref{prp:1-2-in} hold.
Otherwise, $v_i$ has either label $(1,1)$ or $(0,2)$. By definition of $v_i$, any other vertex in $c$ is incident to at most one forward edge. Observe that $v_{i+1}$ has label $(3,0)$, as otherwise we would be in the second case. Since each vertex in $c$ has at least one outgoing edge to another vertex in $c$ (since we have no backward edges and at most one forward edge), there exists a path $v_{i+1} \rightarrow v_j \rightarrow v_{j'}$  with $1 \leq j,j' \leq k$ in $F_2 \cup F_3$. Note that $v_j$ cannot be the last vertex of $c$, since this would imply an outgoing multi-edge; a case which has been addressed at the beginning of the proof. Further, we have that $v_j'$ does not coincide with $v_{i+1}$, as otherwise there is an outgoing multi-edge at $v_j$, a case we covered before. If $v_{i+1} \prec_y v_j \prec_y v_{j'}$, then we reorient the edge  $(v_{i+1}, v_j)$ from $v_j$ to $v_{i+1}$; see \cref{fig:cycle-flip-2,fig:cycle-flip-3}. While $v_{i+1}$ has now label $(2,0)$ (since it has at most two critical neighbors), $v_j$ can have label $(2,1)$ or $(1,2)$, hence  Property~\ref{prp:no-3-0} holds. If $v_j$ has label $(2,1)$, Property~\ref{prp:1-2-in} clearly holds. Otherwise, $v_j$ the label $(1,2)$, but  since $v_j \prec_y v_{j'}$ and $v_{j'} \in c$, Property~\ref{prp:1-2-in} still holds.
Since the vertices $v_{i+1}$ and $v_j$ that are incident to the reoriented edge $(v_{i+1}, v_j)$ have feasible labels and since the labels of the remaining vertices of $c$ are not affected, Property~\ref{prp:feasible} is maintained.

To complete the proof, it remains to consider the case in which $v_{i+1} \prec_y v_{j'} \prec_y v_j$. We place $v_{i+1}$ as the first vertex of $c$ in $\prec_y$ and $v_{i+2}$ as the last. This internal reordering guarantees that $v_{i+1} \prec_y j \prec_y j' \prec_y v_{i+2}$ and we proceed as in the previous case. This operation is illustrated in \cref{fig:cycle-flip-1,fig:cycle-flip-2}.\qedhere
\newpage
\end{proof}

Now that $\prec_y$ is completely defined, we orient any edge $(u,v) \in M$ from $u$ to $v$ if and only if $u \prec_y v$. In this case, we further add $v$ as a critical vertex of $u$. This implies that some vertices can have one more critical upper neighbor, which then gives rise to the new following labels, which we call \emph{tags} for distinguishing: $\{[3,1],[3,0],[2,2],[2,1],[2,0],[1,2],[1,1],[0,2]\}$. In this context, \cref{lem:no-3-0} guarantees the following property. 

\begin{prop}\label{prop:one-three}
Any cycle $c$ of $F_1$ has at most one vertex with tag $[\alpha,\beta]$ such that $\alpha+\beta=4$.
\end{prop}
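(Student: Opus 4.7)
My plan is to unpack what the tag $[\alpha,\beta]$ counts and then use \cref{lem:no-3-0} together with the structural remark in the paragraph preceding it about labels of sum $3$.

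First I would recall that, by definition, $\alpha+\beta$ equals the number of critical neighbors of $u$ in the refined setting (after orienting the matching). Each vertex has at most one outgoing edge in each of $F_2$, $F_3$, $\bar{F_1}$, and $M$, contributing at most $4$ critical neighbors in total. Thus the only way for a tag to achieve $\alpha+\beta=4$ is for the vertex to have outgoing edges in all four of $F_2$, $F_3$, $\bar{F_1}$, and $M$. In particular, such a vertex $u$ must already carry a label $(\alpha',\beta')$ with $\alpha'+\beta'=3$, because the transition from label to tag only adds at most one upper critical neighbor (coming from the outgoing matching edge).

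Next I would invoke the explicit observation from the paragraph preceding \cref{lem:no-3-0}: a label with $\alpha'+\beta'=3$ (that is, in $\{(3,0),(2,1),(1,2)\}$) occurs precisely at vertices incident to a closing edge of $F_1$ as its source, and each cycle of $F_1$ contains at most one such vertex, namely its first vertex in $\prec_y$. Hence, within any single cycle $c$ of $F_1$, at most one vertex can have a label with $\alpha'+\beta'=3$, and therefore at most one vertex of $c$ can have a tag with $\alpha+\beta=4$.

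The argument is essentially bookkeeping, so I do not anticipate a genuine obstacle; the only subtle point is being careful that the increment from label to tag is at most $+1$ (and always on the $\alpha$ side, since matching edges are oriented toward the larger endpoint in $\prec_y$), so that reaching sum $4$ forces the label sum to already be $3$. Once that is noted, \cref{lem:no-3-0}\ref{prp:no-3-0} further tells us the label is in fact $(2,1)$ or $(1,2)$ rather than $(3,0)$, but for the uniqueness statement of \cref{prop:one-three} this refinement is not needed: uniqueness already follows from the closing-edge count.
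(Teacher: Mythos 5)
Your skeleton is the right one, and the first half is fine: the matching $M$ contributes at most one additional upper critical neighbor to each vertex (only to the source of its matching edge), so a tag with $\alpha+\beta=4$ forces the underlying label to have sum $3$. The gap is in the second half, where you justify ``at most one label of sum $3$ per cycle'' by the observation that such a label occurs precisely at the source of a closing edge of $F_1$, namely the first vertex of the cycle in $\prec_y$. That characterization is only valid for the \emph{initial} orientation, before \cref{lem:no-3-0} is applied. The lemma explicitly allows a reorientation of one edge of $F_2\cup F_3$: in the case where the first vertex $v_{i+1}$ of $c$ would receive label $(3,0)$, the edge $(v_{i+1},v_j)$ is reversed, after which $v_j$ carries a label of sum $3$ \emph{without} being incident to any closing edge (it now has two outgoing edges in $F_2\cup F_3$), while the actual source of the closing edge, $v_{i+1}$, drops to label $(2,0)$. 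So ``sum $3$ $\Leftrightarrow$ source of a closing edge'' fails in both directions exactly in the situation the lemma is designed to handle, and your uniqueness claim does not follow from the closing-edge count alone.

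The conclusion you want is still true, but it has to be read off from the case analysis of \cref{lem:no-3-0} rather than from the pre-lemma observation: in every case of that proof, the reorientation merely \emph{transfers} the unique sum-$3$ label from one vertex to another (from the first vertex to the last when the closing edge itself is reversed, or from $v_{i+1}$ to $v_j$ when an edge of $F_2\cup F_3$ is reversed) and never creates a second vertex of label sum $3$ in the same cycle. Once you replace your second step by this statement --- which is precisely what the paper means when it says that \cref{lem:no-3-0} ``guarantees'' \cref{prop:one-three} --- your argument closes. As written, however, it rests on a claim that the lemma's own construction invalidates.
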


Next, we compute the final drawing satisfying Properties \ref{prp:no-box-overlap} and \ref{prp:in-the-box} by performing two iterations over the vertices of $G$ in reverse $\prec_y$ order. In the first, we specify the final position of each vertex of $G$ in $\prec_x$ and classify its incident edges while maintaining the following \cref{inv:consecutive}. In the second one, we exploit the computed $\prec_x$ to draw all edges of~$G$.
 
\begin{invariant}\label{inv:consecutive}
The endvertices of each vertical type-$2$ edge are consecutive in~$\prec_x$. Further, any vertex is incident to at most one vertical type-$2$ edge.
\end{invariant} 

The second part of \cref{inv:consecutive} implies that the vertical type-$2$ edges form a set of independent edges. In this regard, we say that a vertex $u$ is a \emph{partner} of a vertex $v$ in $G$ if and only if $u$ and $v$ are connected with an edge in this set. 

In the first iteration, we assume that we have processed the first $i$ vertices $v_n,\ldots,v_{n-i+1}$ of $G$ in reverse $\prec_y$ order and we have added these vertices to $\prec_x$ together with a classification of their incident edges satisfying \cref{inv:consecutive}.
We determine the position of $v_{n-i}$ in $\prec_x$ based on the $\prec_x$ position of its upper critical neighbors. The incident edges of $v_{n-i}$ are classified based on a case analysis on its tag $[\alpha,\beta]$. Recall that unless otherwise specified, every edge is a type-$1$ edge.

\begin{enumerate}
\item The tag of $v_{n-i}$ is $[3,1]$ or $[3,0]$: 
Let $a$, $b$ and $c$ be the upper critical neighbors of $v_{n-i}$, which implies that they were processed before $v_{n-i}$ by the algorithm and are already part of $\prec_x$. W.l.o.g.\ assume that $a \prec_x b \prec_x c$. By \cref{inv:consecutive}, vertex $b$ is the partner of at most one already processed vertex $b'$, which is consecutive with $b$ in $\prec_x$. If $b'$ exists and $b' \prec_x b$, then we add $v_{n-i}$ immediately after $b$ in $\prec_x$. Symmetrically, if $b'$ exists and $b \prec_x b'$, then we add $v_{n-i}$ immediately before $b$ in $\prec_x$. Otherwise, we add $v_{n-i}$ immediately before $b$ in $\prec_x$. This guarantees that $v_{n-i}$ is placed between $a$ and $c$ in $\prec_x$ and that \cref{inv:consecutive} is satisfied, since none of the upper critical edges incident to $v_{n-i}$ was classified as a~type-$2$~edge.

\item The tag of $v_{n-i}$ is $[2,1]$ ,$[2,0]$, $[1,2]$, $[1,1]$ or $[0,2]$: 
By appending \mbox{$v_{n-i}$ to $\prec_x$,} we maintain \cref{inv:consecutive}, since none of the upper critical edges incident to $v_{n-i}$ was classified as type-$2$.

\item The tag of $v_{n-i}$ is $[2,2]$: Let $a$ and $b$ be the upper critical neighbors of $v_{n-i}$, which implies that they were processed before $v_{n-i}$ by the algorithm and are already part of $\prec_x$. W.l.o.g. assume that $(v_{n-i},a) \in M$. We classify the edge $(v_{n-i},b)$ as a vertical type-$2$ edge and we add $v_{n-i}$ immediately before $b$ in $\prec_x$. To show that \cref{inv:consecutive} is maintained by this operation it is sufficient to show that $b$ was not incident to a vertical type-$2$ edge before. Suppose for a contradiction that there is a vertex $b'$ in $\{v_n,\ldots,v_{n-i+1}\}$, such that $(b,b')$ or $(b',b)$ is a type-$2$ edge. As seen in the previous cases, this implies that $b$ or $b'$ has tag $[2,2]$, respectively. Since in the $[2,2]$ case the edge classified as type-$2$ is the one not in $M$ and since any vertex that has tag $[2,2]$ has label $(1,2)$, by \cref{lem:no-3-0} it follows that vertical type-$2$ edges are chords of a cycle. Hence, $b$ or $b'$ would lie in the same cycle as $v_{n-i}$, which is a contradiction to \cref{prop:one-three}, thus \cref{inv:consecutive} holds.
\end{enumerate}

Orders $\prec_x$ and $\prec_y$ define the placement of the vertices. By iterating over the vertices, we describe how to draw the~edges to complete the drawing such that Properties \ref{prp:no-box-overlap} and \ref{prp:in-the-box} are satisfied. We distinguish cases based on the tag of the current vertex $v_i$.

\begin{enumerate}
    \item The tag of $v_i$ is $[3,1]$ or $[3,0]$: 
    Let $\{a,b,c\}$ be the upper critical neighbors of $v_i$. The construction of $\prec_x$ ensures that not all of $\{a,b,c\}$ precede or follow $v_i$ in $\prec_x$, w.l.o.g. we can assume that $a \prec_x b,v_i \prec_x c$. Then, we assign the $W$-port at $v_i$ to $(v_i,a)$, the $N$-port at $v_i$ to $(v_i,b)$ and the $E$-port at $v_i$ to $(v_i,c)$. If $v_i$ has a lower critical neighbor, we assign the $S$-port at $v_i$ for the edge connecting $v_i$ to it.
    
    \item The tag of $v_i$ is $[2,1]$ or $[2,0]$:
    Let $\{a,b\}$ be the upper critical neighbors of $v_i$. We assign the $N$-port at $v_i$ to $(v_i, a)$. Note that $v_i$ was appended to $\prec_x$ during its construction. If $b \prec_x v_i$, we assign the $W$-port at $v_i$ to $(v_i,b)$. Otherwise, we assign the $E$-port at $v_i$ to $(v_i,b)$. The $S$-port is assigned to the lower critical edge of $v_i$, if present.
    
    \item The tag of $v_i$ is $[1,2]$ or $[0,2]$: This case is symmetric to the one above by exchanging the roles of upper and lower critical neighbors and $N$- and $S$-ports.

    \item The tag of $v_i$ is $[1,1]$:
    Let $a$ be the upper critical neighbor and $b$ the lower critical neighbor of $v_i$. Then we assign the $N$-port to the edge $(v_i, a)$ and the $S$-port to $(v_i, b)$.

     \item The tag of $v_i$ is $[2,2]$:
     Let $\{a,b\}$ and $\{c,d\}$ be the upper and lower critical neighbors~of~$v_i$. W.l.o.g. let $(v_i,a) \in M$. By \cref{inv:consecutive} and construction, the edge $(v_i,b)$ is a type-$2$ edge. The $N$- and $S$-ports at $v_i$ are assigned to the edges $(v_i, a)$ and $(v_i, c)$. If $d \prec_x v_i$, we assign the $W$-port at $v_i$ to $(v_i,d)$. Otherwise, we assign the $E$-port at $v_i$ to $(v_i,d)$.
\end{enumerate}

We describe how to place the bends of the edges on each side of the box $B(u)$ of an arbitrary vertex $u$ based on the type of the edge that is incident to $u$, refer to \cref{fig:the-box}. We focus on the bottom side of $B(u)$. Let $(x_u,y_u)$ be the position of $u$ that is defined by $\prec_x$ and $\prec_y$.
Recall that the box $B(u)$ has size $8 \times 8$. Let $e = \{u,v\}$ be an edge incident to $u$. If $e$ is a horizontal type-$2$ edge, then we place its bend at $(x_u-3,y_u-4)$, if $v \prec_x u$, otherwise we have $u \prec_x v$ and we place the bend at  $(x_u+3,y_u-4)$. If $e$ is a type-$1$ edge that uses the $S$-port of $u$, then segment of $e$ incident to $u$ passes through point $(x_u,y_u-4)$.
If $e$ is a type-$1$ edge oriented from $v$ to $u$ such that $v \prec_y u$ and $e$ uses either the $W$-port or the $E$-port of $v$, then we place the bend at $(x_u+i,y_u-4)$ with $i \in \{-2,-1,1,2\}$. Since any vertex has at most four incoming type-$1$ edges after applying \cref{lem:no-3-0}, we can place the bends so that no two overlap. No other edge crosses the bottom side of $B(u)$. The description for the other sides can be obtained by rotating this scheme; for the left and the right side the type-$2$ edges are the vertical ones.

We now describe how to draw each edge $e=(u,v)$ of $G$ based on the relative position of $u$ and $v$ in $\prec_x$ and $\prec_y$ and the type of $e$. Refer to \cref{fig:k8}.
Suppose first that $e$ is a type-$2$ edge. If $e$ is a horizontal type-$2$ edge, then $u$ and $v$ are consecutive in $\prec_y$ and $B(u)$ and $B(v)$ are aligned in $y$-coordinate, in particular, there is a horizontal line that contains the top side of one box and the bottom side of the other, hence it passes through the two assigned bend-points, which implies that the middle segment is horizontal. Similarly, if $e$ is a vertical type-$2$ edge, then $u$ and $v$ are consecutive in $\prec_x$ by \cref{inv:consecutive}. Hence, the assigned points for the bends define a vertical middle segment.
Suppose now that $e$ is a type-$1$ edge. 
The case analysis for the second iteration over the vertices guarantees that for any relative position of $v$ to $u$, we assigned an appropriate orthogonal port at $u$ which allows to find a point on the first segment, such that the orthogonal middle-segment of the edge $e$ (that is perpendicular to the first) can reach the assigned bend point on the boundary of $B(v)$.

We argue that the constructed drawing is indeed $2$-bend RAC as follows. By construction, every edge consists of three segments and no bend overlaps with an edge or with another bend. Each vertical (horizontal) line either crosses only one box or contains the side of exactly two boxes, whose corresponding vertices are consecutive in $\prec_x$ ($\prec_y$). This implies that if a vertical (horizontal) segment of an edge shares a point with the interior of a box, then this box correspond to one of its endvertices. Further, any oblique segment is fully contained inside the box of its endvertex, hence crossings can only happen between a vertical and a horizontal segment \mbox{which implies that the drawing is RAC.}

To complete the proof of \cref{thm:7colorable}, we discuss the time complexity and the required area.
We apply \cref{thm:2-factors} to $G \setminus M$ to obtain $F_1$, $F_2$, $F_3$ in $\mathcal{O}(n)$ time. Computing the labels clearly takes $\mathcal{O}(n)$ time. For each cycle of $F_1$, the ordering of its internal vertices in \cref{lem:no-3-0} can be done in time linear in the size of the cycle by computing for each vertex the number of forward and backward edges, and of chords. Computing the tags takes $\mathcal{O}(n)$ time. In each of the following two iterations, we perform a constant number of operations per vertex. Hence we can conclude that the drawing can be computed in $\mathcal{O}(n)$ time.
For the area, we can observe that the size of the grid defined by the boxes is $8n \times 8n$ and by construction, any vertex and any bend point is placed on a point on the grid.

\begin{corollary}
Given a $7$-edge-colorable degree-$7$ graph with $n$ vertices and a \mbox{$7$-edge-coloring} of it, it is possible to compute in $O(n)$ time a $2$-bend RAC drawing of it with $O(n^2)$~area.
\end{corollary}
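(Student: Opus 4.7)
The plan is to reduce the corollary directly to \cref{thm:7colorable} by exploiting the given $7$-edge-coloring to extract the required decomposition into a degree-$6$ subgraph and a matching. Concretely, let $\{M_1,\dots,M_7\}$ be the color classes of the $7$-edge-coloring of $G$. Each color class is by definition a matching of $G$, so I would arbitrarily select one of them, say $M := M_7$, and define $H := G \setminus M$ as the subgraph consisting of the edges in $M_1 \cup \dots \cup M_6$. Since every vertex of $G$ is incident to at most one edge of each color, every vertex of $H$ has degree at most $6$, so $H$ is a degree-$6$ graph. This gives precisely the decomposition required by the hypothesis of \cref{thm:7colorable}.

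Once this decomposition is in hand, I would simply invoke \cref{thm:7colorable} on the pair $(H, M)$ to obtain a $2$-bend RAC drawing of $G$ of area $O(n^2)$, computable in $O(n)$ time. The only step that is not a direct invocation is the computation of the decomposition itself, which boils down to scanning the input coloring to separate out a chosen color class; this scan is clearly linear in the number of edges, which is $O(n)$ for a bounded-degree graph, and hence preserves the overall $O(n)$ running time claimed by \cref{thm:7colorable}.

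There is essentially no obstacle here: the corollary is a syntactic specialization of \cref{thm:7colorable} that uses the $7$-edge-coloring to produce the decomposition hypothesized in the theorem. The only thing worth noting is that \cref{thm:7colorable} does not require $H$ to be $6$-regular or the matching $M$ to be perfect; it accepts any degree-$6$ graph together with any matching, which is exactly what a single color class plus the remaining six color classes provide.
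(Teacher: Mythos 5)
Your proposal is correct and matches the paper's (implicit) justification: the corollary is stated without a separate proof precisely because taking one color class as $M$ and the remaining six as $H$ yields the decomposition required by \cref{thm:7colorable}, and this extraction is trivially linear-time. Nothing further is needed.
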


\section{Conclusions and Open Problems}
We significantly extended the previous work on RAC drawings for low-degree graphs in all reasonable settings derived by restricting the number of bends per edge to $0$, $1$, and $2$. The following open problems are naturally raised by our work.
\begin{itemize}
\item Are all $4$-edge-colorable degree-$3$ graphs RAC (refer to \cref{q:degree-3-rac})?
\item Are all degree-$5$ graphs $1$-bend RAC? \cref{fig:1-bend-degree-5-graphs} shows $1$-bend RAC drawings of two prominent degree-$5$ graphs, namely $K_{5,5}$ and the $5$-cube graph. What about degree-$6$ graphs?
\item Is it possible to extend \cref{thm:7colorable} to all (i.e., not 7-edge-colorable) degree-$7$ graphs or even to (subclasses of) graphs of higher degree, e.g. Hamiltonian degree-$8$ graphs?
\item While recognizing graphs that admit a (straight-line) RAC drawing is NP-hard~\cite{DBLP:journals/jgaa/ArgyriouBS12}, the complexity of the recognition problem in the $1$- and $2$-bend setting is still unknown.
\end{itemize}

\begin{toappendix}
\begin{figure}[h!]
    \centering
    \begin{subfigure}[b]{.48\textwidth}
    \centering
    \includegraphics[scale=0.8,page=1]{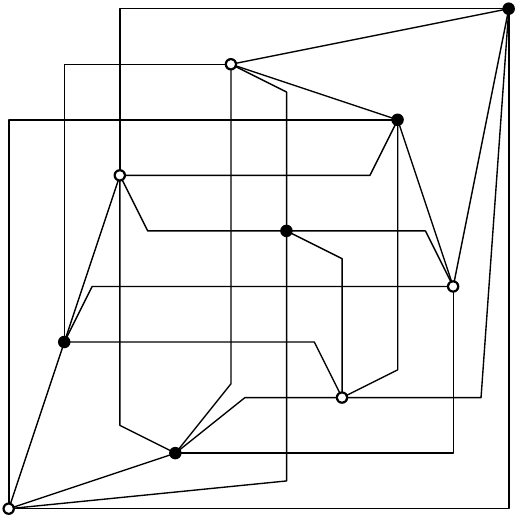}
    \caption{}
    \label{fig:k5,5}
    \end{subfigure}
    \hfil
    \begin{subfigure}[b]{.48\textwidth}
    \centering
    \includegraphics[scale=0.8,page=1]{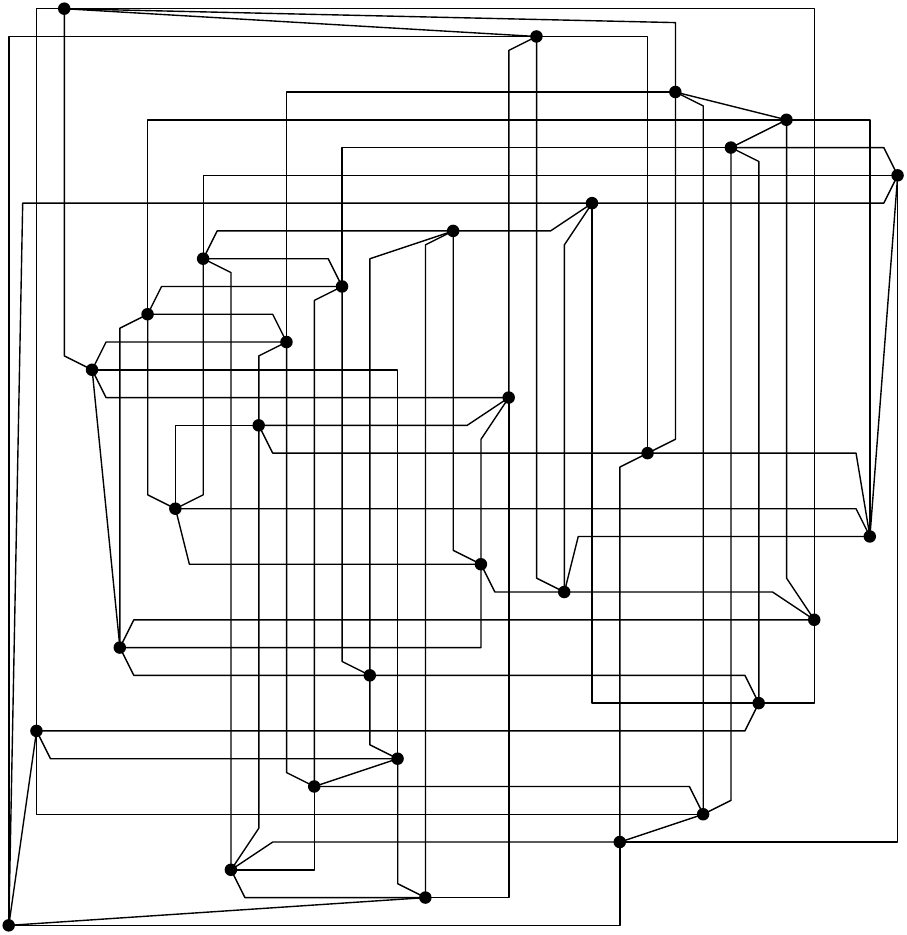}
    \caption{}
    \label{fig:5-cube}
    \end{subfigure}
    \caption{$1$-bend RAC drawings for (a)~the $K_{5,5}$ graph, and (b)~the $5$-cube graph.}
    \label{fig:1-bend-degree-5-graphs}
\end{figure}
\end{toappendix}

\bibliographystyle{plain}
\bibliography{bibliography.bib}

\end{document}